\def\ps@headings{%
\def\@oddhead{\mbox{}\scriptsize\rightmark \hfil \thepage}%
\def\@evenhead{\scriptsize\thepage \hfil \leftmark\mbox{}}%
\def\@oddfoot{}%
\def\@evenfoot{}}
\makeatother \pagestyle{headings}
\newtheorem{theorem}{Theorem}
\newtheorem{proposition}[theorem]{Proposition}
\newtheorem{corollary}[theorem]{Corollary}
\newtheorem{prop}{Proposition}[section]
\newtheorem{cor}{Corollary}
\newtheorem{lm}{Lemma}
\newtheorem{thm}{\bf Theorem}
\newcommand{\bthm}{\begin{thm}}
\newcommand{\ethm}{\end{thm}}
\newcommand{\bcor}{\begin{cor}}
\newcommand{\ecor}{\end{cor}}
\newcommand{\bprop}{\begin{prop}}
\newcommand{\eprop}{\end{prop}}
\newcommand{\blm}{\begin{lm}}
\newcommand{\elm}{\end{lm}}
\newcommand{\beq}{\begin{equation}}
\newcommand{\eeq}{\end{equation}}
\newcommand{\ber}{\begin{eqnarray}}
\newcommand{\eer}{\end{eqnarray}}
\newenvironment{proof1}{\begin{trivlist}\item[]{\bf Proof:\hspace{2mm}}}{\hfill$\blackbox$\end{trivlist}}
\newcommand{\blackbox}{\vrule height7pt width5pt depth1pt}
\newcommand{\bit}{\begin{itemize}}
\newcommand{\eit}{\end{itemize}}
\newcommand{\ben}{\begin{enumerate}}
\newcommand{\een}{\end{enumerate}}
\newcommand{\bdesc}{\begin{description}}
\newcommand{\edesc}{\end{description}}
\newcommand{\beqarrn}{\begin{eqnarray*}}
\newcommand{\eeqarrn}{\end{eqnarray*}}
\newenvironment{proofof}[1]{\begin{trivlist}\item[]{\bf Proof of #1:\hspace{2mm}
}}{\hfill\blackbox\end{trivlist}}
\newcommand{\bproofof}{\begin{proofof}}
\newcommand{\eproofof}{\end{proofof}}
\newenvironment{rem}{\begin{trivlist}\item[]{\bf
Remark:}\hspace{4mm}}{\end{trivlist}}
\newcommand{\brem}{\begin{rem}}
\newcommand{\erem}{\end{rem}}
\newenvironment{rems}{\begin{trivlist}\item[]{\bf
Remarks}\begin{itemize}}{\end{itemize}\end{trivlist}}
\newcommand{\brems}{\begin{rems}}
\newcommand{\erems}{\end{rems}}
\newtheorem{fact}{Fact}
\newcommand{\bfact}{\begin{fact}}
\newcommand{\efact}{\end{fact}}
\newtheorem{examp}{Example}
\newcommand{\bexamp}{\begin{examp}\rm}
\newcommand{\eexamp}{\end{examp}}
\newtheorem{defn}{Definition}
\newcommand{\bdefn}{\begin{defn}\rm}
\newcommand{\edefn}{\end{defn}}
\newtheorem{prob}{Problem}
\newcommand{\bprob}{\begin{prob}}
\newcommand{\eprob}{\end{prob}}
\newcommand{\bvtm}{\begin{verbatim}}
\newcommand{\bfig}{\begin{figure}}
\newcommand{\efig}{\end{figure}}
\newcommand{\bcen}{\begin{center}}
\newcommand{\ecen}{\end{center}}
\long\def\comment#1{}
\def \n2{{N_0 \over 2}}
\def \h5{\hspace{0.5in}}
\begin{document}

%\title{Multiple Timescale  Scheduling and Dispatch in Smart Grid with Integrated Wind Generation:
%Supply Uncertainty, Demand Uncertainty, and Stochastic Reliability}
\title{Multiple Timescale Dispatch and Scheduling for Stochastic Reliability in
Smart Grids with Wind Generation Integration}

\author{\IEEEauthorblockN{Miao He and Sugumar Murugesan and Junshan Zhang}
\IEEEauthorblockA{School of Electrical, Computer and Energy
Engineering, Arizona State
University, Tempe, AZ 85287\\
Email:\{Miao.He, Sugumar.Murugesan, Junshan.Zhang\}@asu.edu }}
\date{}
% make the title area
\maketitle

\begin{abstract}
Integrating volatile renewable energy resources into the bulk power
grid is challenging, due to the reliability requirement that at each
instant the load and generation in the system remain balanced. In
this study, we tackle this challenge for smart grid with integrated
wind generation, by leveraging multi-timescale dispatch and
scheduling. Specifically, we consider smart grids with two classes
of energy users - traditional energy users and opportunistic energy
users (e.g., smart meters or smart appliances), and investigate
pricing and dispatch at two timescales, via day-ahead scheduling and
real-time scheduling. In day-ahead scheduling, with the statistical
information on wind generation and energy demands, we characterize
the optimal procurement of the energy supply and the day-ahead
retail price for the traditional energy users; in real-time
scheduling, with the realization of wind generation and the load of
traditional energy users, we optimize real-time prices to manage the
opportunistic energy users so as to achieve system-wide reliability.
More specifically, when the opportunistic users are non-persistent,
i.e., a subset of them leave the power market when the real-time
price is not acceptable, we obtain closed-form solutions to the
two-level scheduling problem. For the persistent case, we treat the
scheduling problem as a multi-timescale Markov decision process. We
show that it can be recast, explicitly, as a classic Markov decision
process with continuous state and action spaces, the solution to
which can be found via standard techniques.

We conclude that the proposed multi-scale dispatch and scheduling
with real-time pricing can effectively address the volatility and
uncertainty of wind generation and energy demand, and has the
potential to improve the penetration of renewable energy into smart
grids.

%\comment{ We give close-form and numerical solutions for both
%problems. Numerical simulations with CAISO data suggest that 1). RTP
%increases the value of wind integration in generation planning; 2).
%Wind generation is utilized more economically with energy
%conservation. 3). Wind generation is utilized more economically if
%opportunistic energy users are more price elastic. }

\end{abstract}

%\comment{keywords: Smart grid, renewable energy, wind generation, scheduling,
%dispatch, real-time pricing, Markov decision processes}

\section{Introduction}

\subsection{Motivation}
To address the grand challenge of a sustainable energy industry,
there has recently been a surge of interest in alternative energy
resources, including wind, solar, bio-fuel, and geothermal energy.
Ultimately, all these energy solutions hinge heavily on smart grid
technologies that are capable of coordinating and managing
dynamically interacting power grid participants. There is therefore
an urgent need to develop a new generation of cyber-enabled energy
management system (EMS) and supervisory control and data acquisition
(SCADA), which can carry out reliable and possibly distributed
management of these energy sources.

For normal operations of power systems, the precise balance between
energy supply and demand is of the most significance to system
reliability. Integrating a large amount of intermittent renewable
energy resources (e.g., wind generation) into the bulk power grid
has put forth great challenges for generation planning and system
reliability. In particular, a complication that arises is that some
primary elements of renewable energy sources, such as wind and solar
power, are highly variable (stochastic) and often uncontrollable,
making it difficult to guarantee that the load and generation in the
system remain balanced at each instant. A mismatch between supply
and demand could cause a deviation of zonal frequency from nominal
value \cite{Chow05}, and when it gets severe power outages and
blackouts may occur. Further, wind generation is non-dispatchable,
in the sense that the output of wind turbines must be taken by all
rather than by request. Moreover, the volatility and intermittence
makes it difficult for system operators to obtain accurate knowledge
of future wind generations. Traditionally, system operators maintain
additional generation capacity and reserves (on-line or fast-start),
at additional costs, to address the supply uncertainty.
%Thus, in integrating intermittent renewable energy like wind energy into generation planning, traditional approaches would require  subsidies from extra capacity and reserves to tackle the uncertainty of supply side.

While the volatility of wind generation induces uncertainties in the
supply side of the power grid, an emerging class of energy users,
namely \textit{opportunistic energy users}, induce uncertainties on
the \textit{demand} side as well.
%The challenges in integrating the non-traditional class of energy, i.e., the volatile wind energy, into the bulk power grid can be addressed by identifying an equally non-traditional class of users - the \textit{opportunistic energy users}.
It is noted in \cite{Newman08} that over $10\%$ daily energy
consumption in United States is from the usage of appliances such as
water heater, cloth dryers, and dish washers, which are envisaged to
become \textit{smart} and be branded as opportunistic energy users,
with the following behaviors distinct from \textit{traditional
energy users}: 1) they access the energy system in an opportunistic
manner, according to the availability of system resources; 2)
different from the `always-on' demand of traditional energy users,
the load profiles of opportunistic energy users can be bursty and
can be either inelastic or elastic; 3) opportunistic energy users
respond to the power market on a much finer timescale.
%Therefore, academia and industry \cite{Tsoukalas08,Microsoft} have envisioned that
%a cyber-enabled energy system may have Internet-like behaviors and
%dynamics.

%It is clear that with these novel technologies, a full-fledged smart
%grid would be more environment friendly and energy
%efficient, and accordingly, the electricity market becomes more
%accessible to end-users. Nevertheless, all of these have posed new
%challenges to the reliable operation of power systems. It is pointed out in
%\cite{VaraiyaTalk} that the conventional worst-case dispatch with deterministic security
%criterion (e.g. N-1 criterion) would not work well in future
%smart grids: 1) renewable resources and distributed
%generation introduce great uncertainty into the supplies; 2)
%the price response of energy users cause the
%demand to be uncertain. Therefore, there is an urgent need to
%revisit the reliable scheduling and dispatch in
%smart grid \cite{Marija07,VaraiyaTalk}.

The prevalence of the new class of opportunistic energy users, if
utilized intelligently, makes demand side management (DSM) a
promising solution to reduce the costs incurred by the deep
penetration of wind generation \cite{Sioshansi10a}. Traditionally,
some energy users (e.g., residential and small commercial users) pay
a fixed price per unit of electricity that is established to
represent an average cost of power generation over a given
time-frame (e.g., a month), independent of the generation cost. In
contrast, under price-based DSM programs \cite{Albadi07}, the retail
prices are tied with the generation cost and may vary according to
the availability of energy supplies. Often times, the energy market
consists of a day-ahead market and a real-time market for
electricity. Simply put, the day-ahead  market produces financially
binding schedules for the energy generation and consumption  one day
before the operating day. Further, the real-time market is used to
tune the balance between the energy amount scheduled day-ahead and
the real-time load. However, it is known that existing dynamic
pricing mechanisms (term of use, critical peak pricing, etc) do not
work well for handling generation uncertainty and managing the
demand uncertainty in a real-time manner (i.e., within minutes).

In this study, we will explore multi-timescale dispatch and
scheduling to address the following challenges: 1) the supply
uncertainty as a result of the volatility and non-stationarity of
wind generation; 2) the demand uncertainty due to a large number of
opportunistic energy users and their stochastic behaviors; 3) the
coupling between sequential decisions across multiple timescales.

%In order to avoid the risk of loss of load due to the volatility of wind generation and demand, conventional energy
%supplies are procured in multiple stages as reliability assurance.

\subsection{Summary of Main Contributions}

Aiming to tackle the challenge of integrating volatile wind
generation into the bulk power grid, we study dispatch and
scheduling, for a smart grid model with two classes of energy users,
namely traditional energy users and opportunistic energy users
(e.g., smart meters or smart appliances). We consider a power grid
with both conventional energy sources (e.g., thermal) and wind
generation. Notably, wind generation is among the renewable
resources that has most variability and uncertainty, and exhibits
multi-level dynamics across time. To enhance the penetration of wind
energy, we study multi-timescale dispatch and scheduling based on a
marriage of real-time pricing and  multi-settlement power market
economics.

Specifically, the system controller performs scheduling at two
timescales. In the day-ahead schedule, with the statistical
information on wind generation and energy demands, the operator
optimally procures conventional energy supply and decides the
optimal retail price for the traditional energy users, for the next
day. In the real-time schedule, upon the realization of the wind
energy generation and the demand from traditional energy users
(which is stochastically dependent on the day-ahead retail price),
the controller decides the real-time retail price for the
opportunistic energy users.
%In particular, multi-scale dispatch and scheduling needs to address the following challenges: 1) the supply uncertainty of supply as a result of the
%volatility and non-stationarity of wind generation; 2) the demand uncertainty due to a large number of opportunistic energy users and modeling their stochastic behaviors; 3) the coupling between sequential decisions across multiple timescales.

In particular, we explore multi-scale scheduling for two types of
opportunistic energy users: the non-persistent and the persistent
users. The non-persistent users leave the power market when they
find that the current real-time price is unacceptable, whereas the
persistent opportunistic users wait for the next acceptable
real-time price. We obtain closed-form solutions for the scheduling
problem when the users are non-persistent. For the persistent case,
the scheduling problem is a multi-timescale Markov decision process
(MMDP) that we recast, explicitly, as a standard Markov decision
process that can be solved via standard solution techniques. We
demonstrate, via numerical experiments, that the proposed
two-timescale dispatch and scheduling enables the penetration of
wind generation and hence improves the overall efficiency, by
enabling two-way energy exchange between providers and customers,
and by facilitating both information interaction as well as energy
interaction.

%intelligently managing the demand and supply uncertainties.

\subsection{Related Work}
% dispatch of wind power
% Dispatch of ISO
Roughly speaking, related work falls into two major categories: the
scheduling of power systems with wind generation integration; and
the pricing and management of opportunistic users.

Very recent work \cite{VaraiyaTalk} proposed \textit{risk-limiting
dispatch} in contrast to {\em worst-case dispatch} for traditional
generation planning, by treating scheduling with known demand and
uncertain supply as a multi-stage decision problem with recourse.
%As the uncertainty of future
%wind generation realizes, the operator refines the decisions such that
%the risk of loss of load vanishes at the real-time dispatch.
With the same spirit, \cite{Bouffard08} proposed \textit{stochastic
security} criterion in scheduling with wind generation.
Specifically, a scenario-based approach (a scenario is defined as
the trajectory of potential realization of future supply) is used
to minimize the expected aggregated social cost incurred by all the
scenarios in the scheduling horizon.
%Using a similar approach, the trading
%of wind generation was studied in \cite{Morales10} for a wind power producer in a multi-settlement
%power market. Based on the short-term statistical model of wind generation, the scenario tree for
%future wind generation is constructed.
%The producer makes the offers sequentially based on the scenario tree, such that
%the expected profit is maximized.
\textit{Model predictive dispatch} of wind generation was studied in
\cite{Le08}, which treats future wind generation in the
 scheduling horizon as a dynamic process specified by an ARMA model, aiming to minimize the
 total generation cost subject to the physical constraints of conventional generators.
In a nutshell, all the works noted above focus on managing the
energy supply and treat the energy demand as known (or statistically
known) but uncontrollable.

Real-time pricing in related works \cite{Albadi07,Sioshansi10a}
takes place on the timescale of hours in conventional power systems,
and the price response of energy users are understood with some
high-level models. For a cyber-enabled energy system with a large
number of opportunistic energy users, clearly it is more desirable
to accomplish the price response and manage the energy demand in a
much finer timescale (e.g. minutes). Indeed, this is one salient
feature envisioned for  future smart grids, although this poses
significant challenges for planning, modeling, and controlling
energy generation, transmission and distribution.

The remainder of the paper is organized as follows. In Section~II,
we describe the energy system model and give a brief overview of the
two-timescale settlement power market. We study two-timescale
dispatch and scheduling when the opportunistic users are
non-persistent in Section~III. In Section~IV, we consider persistent
opportunistic users and formulate the scheduling problem as a
multi-timescale Markov decision process. We provide concluding
remarks and identify directions for future research in Section~V.

\section{System Model and Problem Formulation}
% Structure.eps
\begin{figure}[htb]
\begin{center}
\includegraphics
[width=0.35\textwidth]{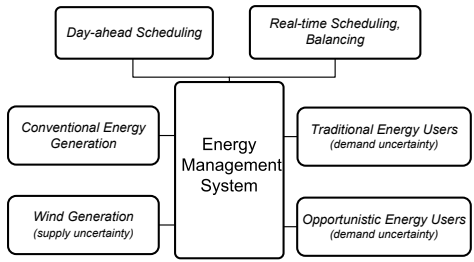}
\end{center}
\caption{An energy management system for smart grid with wind
generation and real-time pricing}\label{fig:Structure}
\end{figure}

%We depict in Fig.~\ref{fig:Structure} a sketch of the structure of an
%energy management system for a smart gird that integrates wind
%generation and opportunistic energy users. Specifically,
%conventional energy generation and wind generation are dispatched,
%in a combined manner, to serve both traditional and opportunistic
%energy users.

%% Timeline.eps
%\begin{figure}[htb]
%\begin{center}
%\includegraphics
%[width=0.4\textwidth]{Draft/Timeline.eps}
%\end{center}
%\caption{A sketch of two time-scale scheduling and
%dispatch}\label{fig:Timeline}
%\end{figure}

We divide a 24-hour period into $M$ slots of length $T_1$ each,
where $M>1$; and each $T_1$-slot, in turn, consists of $K$ slots,
each of length $T_2$, with $K> 1$. For example, the $T_1$-slots can
correspond to hours in a day and the $T_2$-slots can correspond to
minutes.
%We consider multi-timescale dispatch and scheduling in the timescale of $T_1$ and the faster timescale of $T_2$.
We begin with an introduction to the supply side of the power
system.
\subsection{Energy Sources and Generation Costs}

% scheduling setup - the day ahead and real time scheduling, as illustrated in {\bf FIG},.

%First, we present a brief review of the economics of the generation
%planning with wind integration, based on recent work
%\cite{VaraiyaTalk,Arellano07}. For ease of exposition, consider a
%two time-scale scheduling scenario in Fig.~\ref{fig:Timeline}.
We consider a power grid with two kinds of energy sources: the
conventional energy (e.g., thermal) and the wind energy. The
conventional energy is, in turn, drawn from two sources: base-load
generation and peaking generation, with generation cost $c_1$ and
$c_2$ per unit, respectively\footnote{The wind energy is assumed to
be cost-free, based on \cite{wood96}.}. Peaking generation is
typically from fast-start generators (e.g., gas turbines), with a
higher generation cost ($c_2 > c_1$). Due to the start-up time and
ramp rate of generators, the base-load generators are scheduled
day-ahead (24-hours ahead of the corresponding time) for each $T_1$
slot of the next day, and the generation cost $c_1$ contains
start-up cost $c_p$ and other operating costs. In real-time
scheduling of each $T_2$ slot, peaking generation and wind
generation are used, as needed, to clear the balance between demand
and the base-load generation.

Let $D$ denote the aggregate energy demand (from both traditional
and opportunistic users) and $W$ be the wind generation amount, in a
$T_2$-slot. Let $L=D-W$ denote the net demand for conventional
energy in a $T_2$ slot. Suppose that the system operator schedules a
base-load generation amount of $S$ for a $T_1$-slot and thus
schedules $s = {S \over K}$ for each $T_2$-slot within the
$T_1$-slot, for the next day. In real-time scheduling, the system
operator need to balance demand and supply: 1) If $L\ge s$, then
system operator dispatches a base-load generation of $s$ and a
peaking generation of $L-s$; 2) Otherwise, the over-scheduled
generation incurs a start-up cost of $c_p$ per unit.

\subsection{Day-Ahead Pricing and Real-Time Pricing}
As is standard, dynamic pricing contracts between the system
operator and end-users are used to manage the demand of both
traditional energy users and opportunistic energy users, on both
$T_1$ and $T_2$ timescales. We consider the following two-timescale
pricing model:
\begin{itemize}
\item Traditional energy users and opportunistic energy users
have separate pricing contracts: day-ahead pricing for traditional
users and real-time pricing for opportunistic users, respectively;
\item \textbf{Day-ahead retail price:} Traditional energy users are informed, one day ahead, of the day-ahead prices $u$ corresponding to each $T_1$-slot;
\item \textbf{Real-time retail price:} opportunistic energy users receive the real-time retail prices $v$ at the
beginning of each $T_2$-slot;
\item \textit{Both} retail prices have price cap $u_{cap}$ and $v_{cap}$, respectively.
\end{itemize}
Note that besides the day-ahead pricing model for the traditional
users, which was proposed in \cite{Albadi07}, we take a
forward-looking perspective to identify a new class of users, namely
the opportunistic energy users, and devise a real-time pricing model
for these users that is cognizant of the uncertainties in the demand
and supply. We discuss these uncertainties next.

\subsection{Supply Uncertainty of Wind Generation}

% Hybrid wind power forecasting:  physical(NWP) + statistical (Time-series:ARMA)
% Time resolution: coarser + finer
% coarser: point forecast + random error

Wind generation is determined by the geographical and meteorological
conditions, and may assume high fluctuations or relative steady
patterns at different time-scales. Therefore, wind generation are
generally considered to be non-stationary and volatile.
%\comment{ Further, for wind energy which among the renewable
%resources that has most variability, the 500 mb 'winds aloft'
%forecasts give the wind direction and wind speeds at about 30,000
%feet. The 500 mb winds aloft readings give a measurement of the
%motion of high and low pressure areas that are actually the driver
%of surface winds. In the US there are currently several commercial
%entities that provide 10 minute forecasts of wind speeds to electric
%utilities that have high penetration of wind generation.
%
%
%These include companies such as 3Tier  and AWS Truewind. This could
%significantly aid wind plant owners/operators for the forecasts.
%Innovative techniques such as upwind sensors tuned to each wind
%project, and ground based radar that are capable of monitoring a
%radius of 50 miles around each wind project are being proposed. It
%is also being proposed that multiple sensors be added to existing
%structures in the system such as microwave towers, powerhouses, etc.
%These enhancements would significantly help in: 1) identifying
%ramping vulnerability in the pre-schedule and coordinate generation
%that will be needed to support it; 2) update coordination efforts
%throughout the operating day; and 3) real time tracking when the
%likelihood of ramp(s) is imminent. }
%
%According to the physical characteristics of wind turbine, the
%output power of wind turbines can be approximated to have a linear
%relationship to the wind speed in a moderate range.
Based on recent works \cite{Hetzer08}, \cite{xu04}, the wind
generation can be modeled as a non-stationary Gaussian random
process across $T_1$-slots, i.e., the wind generation amount in
$k$th $T_2$-slot of $m$th $T_1$-slot is given by
\begin{equation}
W_{k,m} \sim \mathcal{N}(\theta_m, \sigma^2),
\end{equation}
where $\theta_m$ is the mean and $\sigma^2$ is the variance. The
statistical information is available, one day ahead, to the
day-ahead scheduler. Indeed, this information is commonly provided
by the forecasting functions of EMS or commercial entities.

\subsection{Demand Uncertainty under Real-time Pricing}
In a $T_2$-slot, the aggregated demand from energy users, i.e., $D$,
consists of two components: $D_t$
from the traditional energy users, and $D_o$ from the opportunistic energy users. \vspace{10pt}\\
{\bf Demand model for traditional energy users:} The demands of
traditional energy users could be known with reasonable accuracy,
and the short-term price response of demand is well understood
\cite{Allcott09,Boisvert04}. Based on \cite{Fleten05}, we model the
energy demand of traditional energy users $D_t$ in a $T_2$-slot as a
random variable with mean depending on the day-ahead price $u$:
\begin{equation}
D_t = \mathbf{E}\left[D_t\right]+ \varepsilon_t,\label{eq:Dp}
\end{equation}
\noindent with
\begin{equation}
\mathbf{E}\left[D_t\right] =  \alpha_t {u }^{\gamma_t},
\label{eq:PriceElast}
\end{equation}
where $\varepsilon_t$ accounts for the uncertainty of $D_t$ and
$\gamma_t$ is the price elasticity of the traditional energy users
at the corresponding $T_1$-slot. Price elasticity is traditionally
used to characterize the price response of energy users and is
formally defined as the ratio of marginal percentage change in
demand to that of price, i.e., {\small
\begin{equation}
\gamma_t = {u \over D_t} {dD_t \over du}.
\end{equation}}\noindent
Note that, for power systems, the price elasticity is negative, and
$\alpha_t$ is the normalizing constant. Worth noting is that
$\alpha_t$ and $\gamma_t$ could be different for each of the
$T_1$-slots, since traditional energy users' demand and persistence
for consumption may depend on the term of the day (noon or evening,
peaking or off-peaking hours).\vspace{10pt}\\
{\bf Demand model for opportunistic energy users:} Under real-time
pricing, we assume that opportunistic energy users have the
following behaviors:
\begin{itemize}
\item Opportunistic energy users arrive according to a Poisson process
with rate $\lambda_0$, which is constant within a $T_1$-slot but can
vary across the $T_1$-slots;
\item Opportunistic energy users choose the access time randomly and independently;
\item In each $T_2$-slot, an opportunistic user $i$ arriving in the system decides to accept or reject the announced real-time price $v$ by comparing with a price acceptance level $V_i$. This price acceptance level
    is randomly chosen and is \textit{i.i.d} across the opportunistic users. Thus with $N$ denoting the total number of opportunistic energy users in a $T_2$-slot, the number of \textit{active} opportunistic users, $N_{a}$, is given by
    \begin{equation}
N_a  = \sum\limits_{i = 1}^N {\mathbf{1}_{\{V_i \ge v
\}}};\label{eq:ActUserNo}
\end{equation}
\item Each active opportunistic energy user has a per-unit energy consumption of $E_o$. Thus the total energy demand from opportunistic energy users is given by $D_o=N_a
E_o$;
\item Under real-time pricing, the response of opportunistic energy users may vary
according to the applications. Recent study \cite{Allcott09}
suggests that households respond to high energy prices through
energy conservation with no load shifting, while \cite{Boisvert04}
finds that most of the commercial customers respond by load shifting
to a later time. With these insights, we consider two kinds of
opportunistic energy users: non-persistent and persistent, of which
the former leaves the system if the real-time price is unacceptable,
while the latter waits in the system for a new real-time price in
the next $T_2$-slot.
\end{itemize}

\subsection{General Problem Formulation}
% Two-scale.eps
\begin{figure}[htb]
\begin{center}
\includegraphics
[width=0.45\textwidth]{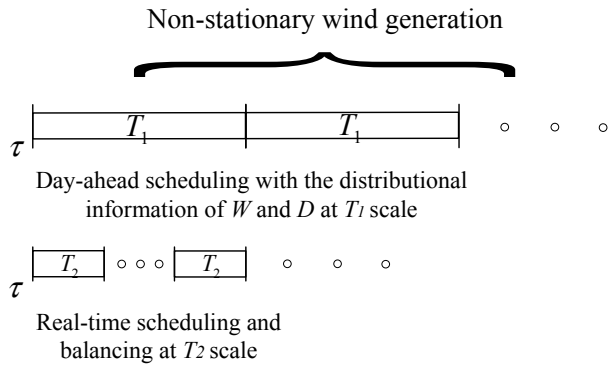}
\end{center}
\caption{Multiple timescale scheduling with non-stationary wind
generation and real-time pricing}\label{fig:Two-scale}
\end{figure}

As illustrated in Fig.~\ref{fig:Two-scale}, we consider day-ahead
scheduling and real-time scheduling with non-stationary wind
generation over a horizon of $M$ $T_1$-slots, with the main
objective being to maximize the overall expected profit. We
elaborate further on multi-timescale dispatch and scheduling below.

%\comment{
%First, consider real-time scheduling in the $k$th $T_2$ slot of the $m$th $T_1$ slot.
%Suppose that the realizations of uncertain wind generation and demand are $W_{k,m}$ and
%$D_{pk,m}$, respectively. Define:
%\begin{equation}
%$\mathcal{A}_{k,m} = \{s_m+W_{k,m} \ge D_{sk,m}+D_{pk,m} \}.
%\end{equation}
%
%Then the objective is to find $ v \in [v_{min}, v_{cap}]$ such that expected
%profit is maximized:
%\begin{eqnarray}
%\nonumber \mathbf{\mathcal{P}_{k,m}^{RT}:\;\;} \mathop {\max }\limits_{v} R_{k,m}^l (S_m, u_m, v_{k,m}) &=& u_m D_{pk,m} + v_{k,m}
%D_{sk,m}\\
%\nonumber &&-\; \mathbf{1}_{\mathcal{A}_{k,m}^c} \left(c_1 s_m + c_2(L_{k,m}-s_m) \right)\\
%&&-\; \mathbf{1}_{\mathcal{A}_{k,m}} \left(c_1 L_{k,m} +
%c_p(s_m-L_{k,m}) \right), \label{eq:formRT}
%\end{eqnarray}
%}

In day-ahead scheduling, with the distributional information of
next-day wind generation, the system operator aims to find a policy
$\bm{\pi}$ that dictates the two-timescale decisions $S$, $u$, and
$v$. A general formulation of the two-timescale scheduling problem
is provided below:
\begin{equation}
\mathbf{\mathcal{P}:\;\;} \mathop {\max }\limits_{\bm{\pi}}
\sum_{m=1}^{M} {R_m^u ( \bm{\pi})}, \label{eq:formDA}
\end{equation}
where, $R_m^u( \bm{\pi})$ is the total profit in a $T_1$-slot under
policy $\bm{\pi}$, given by
\begin{equation}
R_m^u(\bm{\pi}) = \sum\limits_{k = 1}^K E^{\bm{\pi}}_{\psi^l_{k,m}}
R_{k,m}^l (\psi^l_{k,m}, \bm{\pi}),
\end{equation}
where $R_{k,m}^l$ is the net profit in the $k$th $T_2$-slot of the
$m$th $T_1$-slot (henceforth called the $(k,m)$th slot) and
$\psi^l_{k,m}$ is the system state in the $(k,m)$th slot, which is
\textit{observable} in real time. When the opportunistic users are
non-persistent, $\psi^l_{k,m}$ consists of the wind energy and the
energy demand from traditional users. When the opportunistic users
are persistent, $\psi^l_{k,m}$ consists of wind generation and
traditional users' energy demand, as well as the energy requests of
opportunistic users carried over from the previous $T_2$-slot.
%The scheduling problem is explicitly formulated within the contexts of non-persistent and persistent opportunistic users in the upcoming sections.
%For the case of non-persistent opportunistic energy users, we discuss the two-timescale dispatch and scheduling in the next section.
%We will discuss these additionally c in addition to
%\noindent where $\mathbf{W}_m$$= (W_{1,m},$$\cdots,$$W_{K,m})$, and
%$\mathbf{D}_m =$ $(D_{1,m},$ $\cdots,$ $D_{K,m})$ are the uncertain wind generations
%and total demands in the $m$th $T_1$ slot, respectively. For convenience, define
%\begin{equation}
%\mathcal{A}_{k,m} = \{s_m+W_{k,m} \ge D_{sk,m}+D_{pk,m} \}.
%\end{equation}
%\noindent It follows that the profit in the $k$th $T_2$-slot within the $m$th $T_1$-slot is given by:
%\begin{eqnarray}
%\nonumber \lefteqn{R_{k,m}^l (S_m, u_m, v_{k,m})}\nonumber\\
% &=& u_m D_{pk,m} + v_{k,m} D_{sk,m}\nonumber \\
%\nonumber &&-\; \mathbf{1}_{\mathcal{A}_{k,m}^c} \left(c_1 s_m + c_2(L_{k,m}-s_m) \right)\\
%&&-\; \mathbf{1}_{\mathcal{A}_{k,m}} \left(c_1 L_{k,m} +
%c_p(s_m-L_{k,m}) \right). \label{eq:Profit}
%\end{eqnarray}

\section{Dispatch and Scheduling with Non-persistent Opportunistic Energy Users}

%Recall that non-persistent opportunistic energy users response to
%high real-time price $v$ by leaving the system, thus the total
%demand in the $T_2$-slot only depends on current decisions, which
%enable us to decouple the day-ahead scheduling problem at each $T_1$
%slot from others. Further, since the distribution of $D_p$ and $W$
%do not change in the $T_1$-slot, it suffices to focus on the
%scheduling on timescale $T_2$.
Note that the energy procurement and retail price in day-ahead
schedule have significant impact on the real-time retail price. The
real-time pricing policy, in turn, affects the optimization of the
day-ahead schedule. This tight coupling, underscores the need for
joint optimization of the day-ahead and real-time schedules. To this
end, we take a ``bottom-up'' approach in formulating the two-level
scheduling problem. Specifically, we first formulate the real-time
scheduling problem, conditioned on the day-ahead scheduling
decisions $S$ and $u$; and then introduce the day-ahead scheduling
problem while taking into account the real-time scheduling policy.
%
%We now proceed to introduce the exact formulation of the real time and day ahead scheduling problem when the opportunistic users are non-persistent.

\subsection{Real-time Scheduling on Timescale $T_2$}
Given the amount of conventional energy procurement $S$ and the
day-ahead price $u$ settled in day-ahead scheduling, together with
the realizations of wind generation $W$ and traditional energy
demand $D_t$, the real-time scheduling problem in a $T_2$-slot is
formulated as:
\begin{equation}
\mathbf{\mathcal{P}^{RT}_{non-pst}:~~} \mathop {\max }\limits_{v}
R^l (\psi^l, s,u,v),
\end{equation}
\noindent where $\psi^l = \{W, D_t\}$ is the system state and
\begin{eqnarray}\label{eq:9}
R^l(\psi^l,s,u,v)&=&u
D_{t}+E_{D_o}^v[vD_o+\textbf{1}_A(-c_p s)\nonumber\\
&&+~\textbf{1}_B(-c_p\epsilon - c_1(s-\epsilon))\nonumber\\
&&+\left.\textbf{1}_C(-c_1 s+c_2\epsilon)\right],
\end{eqnarray}
where, recall that $D_o$ denotes the energy demand of opportunistic
energy users. The quantity $\epsilon$ denotes the surplus energy,
given by
\begin{eqnarray}
\epsilon&=&W+s-(D_{t}+ D_{o}).
\end{eqnarray}
The indicator function $\mathbf{1}_A$ corresponds to the event when
the wind energy is sufficient to meet the demands of both types of
users. Indicator $\mathbf{1}_B$ refers to the event when the wind
energy is not sufficient but the total of the scheduled traditional
energy and the wind energy is higher than the aggregate demand
$D_{t}+D_{o}$, which necessitates the cancelation of part of the
scheduled generation from the base-line generators incurring a
penalty $c_p$ per unit for the canceled amount of generation.
Indicator $\mathbf{1}_C$ corresponds to the event that the total of
scheduled baseline generation and the wind generation is
insufficient to meet the aggregate energy demand and the controller
must purchase the deficit energy from fast start-up generators at a
cost $c_2$. Formally, the events are described below:
\begin{equation}
\mathbf{1}_{A}= \left\{
\begin{array}{ll}
1  & \mbox{if}~ W\ge D_{t}+D_{o}\\
0  & \mbox{otherwise}
\end{array}\right.
\end{equation}
\begin{equation}
\mathbf{1}_{B}= \left\{
\begin{array}{ll}
1  & \mbox{if}~ \textbf{1}_{A}=0~ \mbox{and}~ \epsilon\ge 0\\
0  & \mbox{otherwise}
\end{array}\right.
\end{equation}
\begin{equation}
\mathbf{1}_{C}= \left\{
\begin{array}{ll}
1  & \mbox{if}~ \epsilon<0\\
0  & \mbox{otherwise}
\end{array}\right.
\end{equation}

\subsection{Day-ahead Scheduling on Timescale $T_1$}
The day-ahead scheduling problem in the non-persistence case is
formally given by
\begin{eqnarray}
\mathbf{\mathcal{P}^{DA}_{non-pst}}: \mathop {\max }\limits_{S,u}
\sum_{k=1}^{K}{\mathbf{E}_{W_k} \mathbf{E}_{D_{t_k}}^{u} \max_{v_k}
\left[ R^l_k (\psi^l_k, s, u, v_k) \right]}. \label{eq:formDA1}
\end{eqnarray}
where the inner maximization corresponds to the real-time scheduling
studied above. Since $W_k$ and $D_{t_k}$ are independent and
identically distributed across the $T_2$-slots, the day-ahead
scheduling problem in the non-persistent case can be optimized by
simply considering the snapshot problem in a specific $T_2$-slot,
given by\footnote{We drop the suffix $k$ for notational
simplicity.}:
\begin{eqnarray}
\mathbf{\mathcal{P}^{DA}_{non-pst}}: \mathop {\max }\limits_{S,u}
{\mathbf{E}_{W} \mathbf{E}_{D_{t}}^{u} \max_{v} \left[ R^l(\psi^l,
s, u, v) \right]}. \label{eq:formDA2}
\end{eqnarray}

\subsection{Approximate Solutions}
The tight coupling between the day-ahead and real-time scheduling
problems, along with the convolved nature of the uncertainties
involved, makes a direct joint optimization challenging. We
therefore take an alternate approach and obtain approximate
solutions to the real-time schedule. Based on the approximate
real-time schedule, we propose solutions to the day-ahead schedule.
In light of the characteristics of practical systems, we impose the
following general conditions first.
\begin{itemize}
\item \textbf{Condition $\mathbf{A}$:} Wind generation is not sufficient to
meet the total energy demand in the system.
\end{itemize}

Needless to say, under condition $\mathbf{A}$, $1_A=0$ in
(\ref{eq:9}). Define $Y = s + W - D_{t}$ as the energy procurement
for the demand of opportunistic energy users, then (\ref{eq:9})
reduces to
\begin{equation*}
R^l (\psi^l, s, u, v) = \mathbf{E}^v_{D_o} \left[ (v -c_2) D_o  - c
\mathbf{1}_B (Y-D_o) \right] ~~
\end{equation*}
\begin{equation}
~+~ u D_t - c_1 s + c_2 Y, \label{eq:formRT0}
\end{equation}
\noindent where $c \buildrel \Delta \over = c_p-c_1+c_2$.

We begin with an analysis of the real-time scheduling problem.
Recall that the number of non-persistent opportunistic energy users
arriving in the $T_2$-slot has a Poisson distribution, with mean
$\kappa_1 = \lambda_o T_2$. It follows that the number of users that
become active, denoted as $N_a$, is also a Poisson random variable
with mean $\kappa_1 \mathbf{P}(V \ge v)$.

Note that the price elasticity of the opportunistic energy users,
$\gamma_o$ is given by
\begin{equation}
\gamma_o = {v \over N_a} {dN_a \over dv}. ~~~~\label{eq:ElasDef}
\end{equation}

The opportunistic energy users are said to be \textit{relatively
inelastic} if $-1 \le \gamma_o < 0 $, i.e., the percentage change in
demand is greater than that of price; otherwise, they are
\textit{relatively elastic}. Further, let $v_{min}$ be the highest
price that is acceptable to all opportunistic energy users.

Then, the solution to (\ref{eq:ElasDef}) is given by:{%%\small
\begin{equation}
N_a = N \left(v / v_{min} \right)^{\gamma_o}. \label{eq:N_a}
\end{equation}}\noindent

Based on (\ref{eq:ActUserNo}), it is easy to see
\begin{equation}
\mathbf{P}(V \ge v) \approx \alpha_o
v^{\gamma_o},\label{eq:PersUserElas}
\end{equation}
\noindent where $\alpha_o  {\buildrel \Delta \over
=}v_{min}^{-\gamma_o}$ is the normalizing constant.

It is known \cite{Newman08} that, $\kappa_1$ is typically large in
practical power systems, and hence, $N_a$ could be approximated by a
Gaussian random variable. Accordingly, the demand of opportunistic
energy users $D_o = N_a E_s$ follows a Gaussian distribution
$\mathcal{N} (q_o(v), \sigma_o^2(v))$, with
\begin{equation}
\nonumber q _o(v) \buildrel \Delta \over = \kappa_1 \alpha_o
v^{\gamma_o} E_o,
\end{equation}
\begin{equation}
\sigma_o^2(v) \buildrel \Delta \over = \kappa_1 \alpha_o
v^{\gamma_o} E_o^2. \label{eq:DsElas}
\end{equation}

Plugging (\ref{eq:DsElas}) in (\ref{eq:formRT0}), we obtain
\begin{eqnarray}
~~\nonumber \tilde{R}^l (\psi^l, s, u, v) =  u { D_t} - c_1 s + c_2
Y + \left(v -c_2\right) q_o(v)
\\
\nonumber  -~ {\left(2\pi\right) }^{-1/2} c \sigma_o(v) \exp\left(-{
y^2/2 }\right)
~~\\
-~ c \left( Y - q_o(v) \right) \left(1 - Q\left( y \right)\right),
~~~~~\label{eq:formRT1}
\end{eqnarray}

\noindent where $ y \buildrel \Delta \over = {{Y-q_o(v)} \over
\sigma_o(v)}$. Denote the real-time pricing policy that solves the
preceding equation by the mapping $\mathbf{\tilde{\vartheta}}_{s,u}:
\psi^l \rightarrow v$, where, recall that $\psi^l$ is the system
state $\{W, D_t\}$.

Using the statistical properties of the wind generation and the
opportunistic user demand, reported in literature, we now further
simplify $\tilde{R}^l (\psi^l, s, u, v)$. According to
\cite{Agabus09}, the standard deviation of wind generation, is of
the same order as the mean of wind generation. Typically, wind
generation and the demand of opportunistic energy users are
comparable. Observe from (\ref{eq:DsElas}) that the variance of the
demand of opportunistic energy users is of the same order as its
mean. We conclude that $\frac{\sigma}{\sigma_o}$ has the same order
as the square root of the average demand of opportunistic users.
Thus, with $\sigma_Y^2 = \sigma_t^2 + \sigma^2$, it follows that
$\sigma_Y^2 \gg \sigma_o^2$.

Recall that the opportunistic energy users are said to be
\textit{relatively inelastic} if $-1 \le \gamma_o < 0 $, i.e., the
percentage change in demand is greater than that of price;
otherwise, they are \textit{relatively elastic}. Since the price
elasticity can have significant impact on $\sigma_o^2$ and $
\sigma_Y^2$, we proceed to study real-time schedule for different
cases of elasticity, with $\sigma_Y^2 \gg \sigma_o^2$.

\begin{proposition}
Suppose \textbf{condition} $\mathbf{A}$ holds. When the
non-persistent opportunistic energy users are relatively inelastic,
i.e., $-1\le \gamma_o<0$, the real-time pricing policy is given by
$\mathbf{\tilde{\vartheta}}_{s,u}(\psi^l)=v_{cap}$. \label{prop:1}
\end{proposition}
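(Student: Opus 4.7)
The plan is to show that, for any fixed day--ahead decision $(s,u)$ and any observed state $\psi^l=\{W,D_t\}$, the approximate objective $\tilde R^l(\psi^l,s,u,v)$ in (\ref{eq:formRT1}) is strictly increasing in $v$ on the admissible pricing range $[0,v_{cap}]$. Monotonicity immediately implies that the constrained maximizer is $v=v_{cap}$.

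First, I would strip out the $v$-independent pieces of $\tilde R^l$. Given $\psi^l$ and $(s,u)$, the terms $uD_t$, $-c_1 s$ and $c_2 Y$ (with $Y=s+W-D_t$) are constants, so it suffices to prove the monotonicity of
$$f(v)\;:=\;(v-c_2)\,q_o(v)\;-\;c\bigl[(Y-q_o(v))\Phi(y)+\sigma_o(v)\phi(y)\bigr],$$
where $y=(Y-q_o(v))/\sigma_o(v)$, $\Phi(y)=1-Q(y)$, and $\phi(y)=(2\pi)^{-1/2}e^{-y^2/2}$. The bracketed quantity is just $g(v):=\mathbf{E}_{D_o}^v[(Y-D_o)^+]$, written out via the standard Gaussian first--order--loss formula.

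Second, I would compute $g'(v)$ carefully. Differentiating with the chain rule and using $\phi'(y)=-y\phi(y)$ produces several cross terms in $y'(v)$, but they cancel due to the identity $Y-q_o=\sigma_o y$. The surviving expression is the clean
$$g'(v)\;=\;-q_o'(v)\,\Phi(y)\;+\;\sigma_o'(v)\,\phi(y).$$
Substituting $q_o'(v)=\gamma_o q_o(v)/v$ and $\sigma_o'(v)=\gamma_o \sigma_o(v)/(2v)$ (from the definitions in~(\ref{eq:DsElas})) into $f'(v)=q_o+(v-c_2)q_o'-c\,g'(v)$ and rearranging gives
$$f'(v)\;=\;q_o(1+\gamma_o)\;+\;\frac{\gamma_o}{v}\Bigl[\,q_o\bigl(c\,\Phi(y)-c_2\bigr)-\tfrac{1}{2}c\,\sigma_o\,\phi(y)\,\Bigr].$$

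Third, I would argue the sign. Because the paper specifies that $c_1$ contains the start--up cost $c_p$ together with other operating costs, we have $c_p\le c_1$, hence $c=c_p-c_1+c_2\le c_2$, and consequently $c\Phi(y)-c_2\le c-c_2\le 0$. Together with $-\tfrac{1}{2}c\sigma_o\phi(y)<0$, the bracket is strictly negative; multiplying by $\gamma_o/v<0$ makes the second summand strictly positive. In the inelastic regime $\gamma_o\ge -1$ the first summand $q_o(1+\gamma_o)$ is non--negative, so $f'(v)>0$ throughout. The boundary case $\gamma_o=-1$ is handled by the strict positivity of the second summand alone. For completeness, on $[0,v_{\min}]$ all users are active so $q_o$ is constant in $v$ and $f'(v)=q_o>0$ trivially.

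The step I expect to be the most delicate is the derivative computation for $g$: because the Gaussian mean and variance of $D_o$ both depend on $v$ through a power law and the argument $y$ packages both dependencies, a naive differentiation produces several terms that appear to interact with the sign of $Y-q_o(v)$. The cancellation exploiting $Y-q_o=\sigma_o y$ is what makes the bound uniform in $\psi^l$, and without it the sign of $f'(v)$ would have to be debated case--by--case in $y$. The cost--structure observation $c_p\le c_1$ is the other essential ingredient; it is what lets the bound survive for arbitrary $Y$ and in particular for $y\to+\infty$, where $\Phi(y)\to 1$.
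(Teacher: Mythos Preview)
Your argument is correct but proceeds quite differently from the paper's. The paper does not differentiate $\tilde R^l$ at all; instead it invokes the scale separation $\sigma_Y^2\gg\sigma_o^2$ established just before the proposition to argue that, with probability close to one, $|y|$ is large, so $Q(y)$ and $\exp(-y^2/2)$ may be replaced by $0$ or $1$. This collapses (\ref{eq:formRT1}) into one of the two expressions $(v-a)\,q_o(v)+\text{const}$ with $a\in\{c_1-c_p,\;c_2\}$, each of which is readily seen to be increasing on $[v_{\min},v_{cap}]$ when $-1\le\gamma_o<0$. Your route is sharper: by computing $g'(v)$ exactly and using the cancellation driven by $Y-q_o=\sigma_o y$, you establish monotonicity of the full Gaussian expression uniformly in $\psi^l$, without any appeal to $\sigma_Y\gg\sigma_o$; the only extra ingredient is the explicit cost relation $c_p\le c_1$, which the paper's reduction also needs tacitly (so that $a=c_1-c_p\ge 0$). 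The paper's approach is shorter and consistent with the approximate spirit of the section, whereas yours is more robust because it covers realizations with moderate $y$ that the tail approximation discards.
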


\begin{proof}
%\subsubsection{The relatively inelastic case}
In practical power systems, according to \cite{Fleten05,Agabus09},
$D_t$ and $W$ usually have continuous, symmetrical and unimodal
probability distributions. Since $\sigma_Y^2 \gg \sigma_o^2$,
intuitively, there exists a finite constant $c_0 \ge 0$, such that:
\begin{equation*}
\mathbf{P}\left( \{Y - q_o(v) \le - c_0 \sigma_o(v)\} \right.
~~~~~~~~~~~~~~~~~~~~~~
\end{equation*}
\begin{equation}
~~~~\left. \cup \{Y - q_o(v) \ge c_0 \sigma_o(v)\} \right) \approx
1,
\end{equation}
\noindent and\footnote{It is well-known that (\ref{eq:ApproxC}) is
valid for $c_0 \ge 3$.}
\begin{equation}
Q(-c_0) \approx 1,~~ Q(c_0) \approx 0,~~\exp(-c_0^2/2) \approx
0.\label{eq:ApproxC}
\end{equation}
If $Y - q_o(v) \ge c_0 \sigma_o(v)$, (\ref{eq:formRT1}) boils down
to:
\begin{equation*}
\tilde{R}^l(\psi^l, s, u, v) =  \left(v - (c_1 - c_p)\right) q_o(v)
+ (c_1 - c_p) Y ~~~~
\end{equation*}
\begin{equation}
+~ u D_t - c_1 s; \label{eq:Approx1} ~~~~~~~~~
\end{equation}
\noindent When $Y - q_o(v) \le - c_0 \sigma_o(v)$,
(\ref{eq:formRT1})
simplifies to{%%\small
\begin{equation}
\tilde{R}^l(\psi^l, s, u, v)  =   \left(v - c_2\right) q_o(v) + u
D_t - c_1 s + c_2 Y.  \label{eq:Approx2}
\end{equation}}\noindent
It is clear that (\ref{eq:Approx1}) and (\ref{eq:Approx2}) are
unimodal for $v \in [v_{min}, v_{cap}]$, both with peaks at
$v=v_{cap}$. This yields the real-time pricing policy:
$\mathbf{\tilde{\vartheta}}_{s,u}(\psi^l)=v_{cap}$.
\end{proof}

\textbf{Remarks}: Note that the result in the preceding proposition
is intuitive, since, with the opportunistic users' energy demand
being relatively insensitive (inelastic) to the announced real-time
price, the scheduler can maximize profit by simply announcing the
highest possible price, $v_{cap}$.

\begin{proposition} \label{prop:2}
Suppose \textbf{condition} $\mathbf{A}$ holds. When the
non-persistent opportunistic energy users are relatively elastic,
i.e., $\gamma_o< -1$, the real-time pricing policy
$\mathbf{\tilde{\vartheta}}_{s,u}$ is given by
\begin{equation}
~~~~\mathbf{\tilde{\vartheta}}_{s,u}(\psi^l) = \left\{
\begin{array}{ll}
{\gamma_o(c_1 - c_p) \over {1+\gamma_o}} & {~~~~ \text{if}~~ Y  \ge
q_o({{\gamma_o}(c_1 - c_p) \over {1+\gamma_o}})} \cr { {\gamma_o}
c_2 \over {1+\gamma_o}} & {~~~~ \text{if}~~ Y < q_o({{\gamma_o} c_2
\over {1+\gamma_o}})} \cr {q_o^{-1}(Y)} & {~~~~ \text{o.w.}
}\nonumber
\end{array}\right.
\end{equation}
\end{proposition}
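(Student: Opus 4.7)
The plan is to mirror the structure of the proof of Proposition~\ref{prop:1}, but exploit the elastic condition $\gamma_o<-1$ which creates an interior maximum on each branch. Under Condition $\mathbf{A}$ we have $\mathbf{1}_A=0$, and the same concentration argument used in Proposition~\ref{prop:1} (based on $\sigma_Y^2\gg\sigma_o^2$ plus the Gaussian assumptions on $W$ and $D_t$) tells us that with probability essentially one, $Y-q_o(v)$ is either above $c_0\sigma_o(v)$ or below $-c_0\sigma_o(v)$. Consequently the effective objective $\tilde{R}^l(\psi^l,s,u,v)$ collapses to either expression~(\ref{eq:Approx1}) (if $Y\ge q_o(v)$, the ``over-scheduled'' regime) or expression~(\ref{eq:Approx2}) (if $Y<q_o(v)$, the ``under-scheduled'' regime). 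Throw away the $v$-independent terms $uD_t-c_1 s$ and a constant multiple of $Y$; the remaining $v$-dependence is $(v-a)q_o(v)$ with $a=c_1-c_p$ in regime~1 and $a=c_2$ in regime~2.

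Using $q_o(v)=\kappa_1\alpha_o E_o\, v^{\gamma_o}$, the first-order condition for $(v-a)q_o(v)$ is $1+\gamma_o(v-a)/v=0$, yielding the unique interior critical point
\begin{equation*}
v^\star(a)=\frac{\gamma_o\,a}{1+\gamma_o}.
\end{equation*}
Since $\gamma_o<-1$ makes $\gamma_o/(1+\gamma_o)>0$, both $v_1^\star:=v^\star(c_1-c_p)$ and $v_2^\star:=v^\star(c_2)$ are positive; the second-order test (or direct inspection of the sign of the derivative on either side) shows each is a strict maximizer of its branch on $(0,\infty)$. I would then check that, under the standing assumption that $v_1^\star,v_2^\star\in[v_{min},v_{cap}]$, the unconstrained optimizers are in fact feasible.

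Next I would translate the regime constraint $Y\gtreqless q_o(v)$ into a constraint on $v$ using that $q_o$ is strictly decreasing (because $\gamma_o<0$), so $q_o^{-1}$ is well-defined and $Y\ge q_o(v)\iff v\ge q_o^{-1}(Y)$. Branch~1 is therefore maximized over $\{v\ge q_o^{-1}(Y)\}$: if $v_1^\star\ge q_o^{-1}(Y)$, i.e.\ $Y\ge q_o(v_1^\star)$, the interior optimum $v_1^\star$ is attained; otherwise the branch is monotone on the feasible set and its supremum is pushed to the boundary $v=q_o^{-1}(Y)$. Symmetrically, Branch~2 on $\{v<q_o^{-1}(Y)\}$ attains its interior optimum $v_2^\star$ precisely when $Y<q_o(v_2^\star)$, and otherwise is driven to the boundary $v\to q_o^{-1}(Y)^-$.

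Finally I would glue the two branches together. In the ``outer'' regions $Y\ge q_o(v_1^\star)$ or $Y<q_o(v_2^\star)$, one branch achieves an interior maximum while the other can only aspire to the boundary, and a direct comparison of the two expressions at the interior maximizer shows it dominates; this yields the first two cases of the stated policy. In the intermediate region $q_o(v_2^\star)\le Y<q_o(v_1^\star)$ (note $v_2^\star>v_1^\star$ since $c_2>c_1-c_p$), both branches are maximized at the common boundary point $v=q_o^{-1}(Y)$, giving the third case. I expect the main obstacle to be the careful verification of this intermediate regime: one has to ensure that the two branch-reward expressions agree in the limit $v\to q_o^{-1}(Y)$ (continuity) so that the supremum is genuinely attained there, and that no artifact of the Gaussian approximation around $Y\approx q_o(v)$ spoils the argument -- a point I would handle by noting the $O(\sigma_o/\sigma_Y)$ smallness of the transition region relative to the reward scale.
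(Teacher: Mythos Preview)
Your proposal is correct and reaches the same three-case policy, but it arrives there by a slightly different route than the paper. The paper does \emph{not} recycle the concentration argument of Proposition~\ref{prop:1}; instead it invokes \emph{certainty equivalence} directly: since $\gamma_o<-1$ makes $\sigma_o(v)$ small, it simply replaces $D_o$ by its mean $q_o(v)$ in (\ref{eq:formRT0}), obtaining the deterministic objective
\[
\tilde R^l(\psi^l,s,u,v)=uD_t-c_1 s+c_2 Y+(v-c_2)q_o(v)-c\,(Y-q_o(v))^+,
\]
and then states the optimizer without further argument. Your approach instead reuses the $\sigma_Y^2\gg\sigma_o^2$ reasoning to collapse (\ref{eq:formRT1}) onto the two branch expressions (\ref{eq:Approx1}) and (\ref{eq:Approx2}); but of course these are exactly the two pieces of the paper's $(Y-q_o(v))^+$ objective, so the two reductions coincide. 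What your write-up adds is a careful branch-by-branch maximization with explicit handling of the feasibility constraints $v\gtrless q_o^{-1}(Y)$, the gluing at the boundary, and the continuity check there---details the paper omits entirely. Either justification for the approximation is acceptable; your optimization argument is more thorough than the paper's, which simply asserts the maximizer.
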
 \begin{proof}
When $\gamma_o< -1$, $\sigma_o(v)$ can be expected to be much
smaller than that in the inelastic case under the same real-time
prices. With this observation, we resort to the \textit{certainty
equivalence} techniques \cite{Water81}. By approximating $D_o$ with
its mean $q_o(v)$, the profit in real-time scheduling is given by:
\begin{equation}
\nonumber \tilde {R}^l(\psi^l, s, u, v) = u { D_t} - c_1 s + c_2 Y +
\left(v -c_2\right) q_o(v) ~~~~
\end{equation}
\begin{equation}
- ~ c(Y-q_o(v))^+, \label{eq:formRTCE} ~~~
\end{equation}\noindent which achieves the optimum at:
\begin{equation}
~~~~v^* = \left\{
\begin{array}{ll}
{{\gamma_o}(c_1 - c_p) \over {1+\gamma_o}} & {~~~~ \text{if}~~ Y \ge
q_o({{\gamma_o}(c_1 - c_p) \over {1+\gamma_o}})} \cr { {\gamma_o}
c_2 \over {1+\gamma_o}} & {~~~~ \text{if}~~ Y < q_o({{\gamma_o} c_2
\over {1+\gamma_o}})} \cr {q_o^{-1}(Y)} & {~~~~ \text{o.w.} ,}
\end{array}\right.
~~~~~~~~~~~~
\end{equation}\noindent which yields the real-time pricing policy $\mathbf{\tilde {\vartheta}}_{s,u}$
in the elastic case.
\end{proof}
\textbf{Remarks}: Note that the first case, i.e., $Y  \ge
q_o({{\gamma_o}(c_1 - c_p) \over {1+\gamma_o}})$, points to a case
with energy surplus, i.e., there is more energy supply than the
total demand from traditional and opportunistic users, whereas the
second case is tied to a case with energy deficit. Then, it is
natural that the real-time price in the first case depends on the
penalty $c_p$ in canceling a scheduled generation; and that the
real-time price in the second case is a function of the cost of fast
start-up generation. Note also that, in both cases, when the
opportunistic users become increasingly elastic, i.e.,
$\gamma_o\rightarrow-\infty$, the real-time price progressively
decreases to the minimum allowable prices, i.e., $c_1-c_p$ and
$c_2$, respectively. This monotonic behavior of the real-time price
with respect to increasing elasticity comes at no surprise, since,
as $\gamma_o\rightarrow-\infty$, the average opportunistic user
demand $q_o(v)\rightarrow 0 $, i.e., the opportunistic energy users
become more and more thrifty. Therefore, the scheduler must offer
power at increasingly cheaper prices, up to the lowest possible
price, to maximize profit.

%By taking into account $\tilde \vartheta^*(s)$
%in day-ahead scheduling, the day-ahead decision $S$ could be
%obtained.
Having established an approximate real-time scheduling policy for
both kind of opportunistic energy users, an approximate day-ahead
schedule can be obtained by solving the following single-stage
optimization:
\begin{eqnarray}
\mathbf{\mathcal{\tilde  P}^{DA}_{non-pst}}: \mathop {\max
}\limits_{S,u} {\mathbf{E}_{W}\mathbf{E}_{D_{t}}^{u} \left[
R^l(\psi^l, s, u, \mathbf{\tilde{\vartheta}}_{s,u}(\psi^l)\right]}.
\label{eq:formDA3}
\end{eqnarray}
where $R^l$ is given by (\ref{eq:9}) and the expectations depend on
the exact stochastic models assumed for wind generation and
traditional users' energy demand.

\begin{proposition} \label{lm:Uopt} The optimal decision of $\mathbf{\mathcal{\tilde  P}^{DA}_{non-pst}}$ is
given by
\begin{eqnarray}\nonumber u^* &=& \left\{
\begin{array}{ll}
{u_{cap} } & { \text{if}~~ -1 \le \gamma_{t} < 0}  \cr {{{\gamma_t}
\over {1+\gamma_t}} c_1} & { \text{if} ~~\gamma_t < -1,}
\end{array}\right. \\
\nonumber S^* &=& \mathop {\arg\max }\limits_{S} \left\{ (c_2-c_1)S
+ K \mathbf{E}_{W,D}^v \left[ (v-c_2) D_o  \right. \right. \\
\nonumber &&\left. \left. -~ c \mathbf{1}_{{B}}( S  - K(\alpha_t
{u^*}^{\gamma_t} + \varepsilon_t +D_o - W))\right] \right\}.
\end{eqnarray}
\end{proposition}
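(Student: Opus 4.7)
The plan is to reduce the bivariate optimization in $\mathbf{\mathcal{\tilde P}^{DA}_{non-pst}}$ to two decoupled one-dimensional problems via a single change of variable.

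First I will adopt the compact form (\ref{eq:formRT0}) of the per-$T_2$-slot profit, valid under Condition $\mathbf{A}$, substitute the closed-form real-time policy $\mathbf{\tilde{\vartheta}}_{s,u}$ of Propositions~\ref{prop:1}--\ref{prop:2}, and use the i.i.d.\ behavior of the $K$ slots inside a $T_1$-slot to rewrite the day-ahead objective (\ref{eq:formDA3}) as $K\,\mathbf{E}_{W,D_t}\!\left[R^l\right]$. Inserting $D_t = \alpha_t u^{\gamma_t} + \varepsilon_t$ with $\mathbf{E}[\varepsilon_t]=0$ expands this to
\begin{equation*}
J(S,u) = (c_2-c_1)S + K\alpha_t u^{\gamma_t}(u-c_2) + K\,\mathbf{E}\!\left[(v-c_2)q_o(v) - c(Y-D_o)^{+}\right] + Kc_2\mathbf{E}[W].
\end{equation*}

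The key step is the substitution $\tilde S \triangleq S - K\alpha_t u^{\gamma_t}$ (equivalently $\tilde s = s - \alpha_t u^{\gamma_t}$). It turns $Y = s + W - D_t$ into $\tilde s + W - \varepsilon_t$, whose law is independent of $u$. A case-by-case inspection of the closed forms in Propositions~\ref{prop:1}--\ref{prop:2} shows that $\mathbf{\tilde{\vartheta}}_{s,u}$ depends on $(s,u)$ only through $Y$, so $v$ is likewise a function of $(\tilde s, W, \varepsilon_t)$ alone. Combining $(c_2-c_1)S$ with $K\alpha_t u^{\gamma_t}(u-c_2)$ yields
\begin{equation*}
J(\tilde S,u) = \underbrace{K\alpha_t u^{\gamma_t}(u-c_1)}_{J_1(u)} + \underbrace{(c_2-c_1)\tilde S + K\,\mathbf{E}\!\left[(v-c_2)q_o(v) - c(Y-D_o)^{+}\right]}_{J_2(\tilde S)} + Kc_2\mathbf{E}[W],
\end{equation*}
which is separable in $u$ and $\tilde S$.

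To optimize $J_1$ I will compute the derivative $K\alpha_t u^{\gamma_t-1}\bigl[(1+\gamma_t)u-\gamma_t c_1\bigr]$. For $-1\le\gamma_t<0$ it is strictly positive on $(0,u_{cap}]$, so $J_1$ is strictly increasing and $u^*=u_{cap}$; for $\gamma_t<-1$ the unique interior critical point $u^* = \gamma_t c_1/(1+\gamma_t)>0$ is a maximum by the sign of $J_1''$. To recover the stated form of $S^*$, I will use translation-invariance of $\arg\max$ to replace $\tilde S$ by $S$ in $J_2$, and then apply the identities $(Y-D_o)^{+}=\mathbf{1}_B(Y-D_o)$ (valid under Condition $\mathbf{A}$) and $K(Y-D_o)=S-K(\alpha_t u^{*\gamma_t}+\varepsilon_t+D_o-W)$ to rewrite $J_2$ in the announced form. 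The main obstacle is verifying that $\mathbf{\tilde{\vartheta}}_{s,u}$ truly depends on $(s,u)$ only through $Y$, since otherwise the change of variable would fail to decouple the objective; this has to be read off case by case from Propositions~\ref{prop:1}--\ref{prop:2}, but once it is, the rest of the argument is elementary first-order calculus.
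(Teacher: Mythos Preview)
Your approach is the paper's: introduce the shifted variable $\tilde s = s - \alpha_t u^{\gamma_t}$, observe that $Y = \tilde s + W - \varepsilon_t$ and that $\tilde\vartheta_{s,u}$ depends on $(s,u)$ only through $Y$, and thereby split the day-ahead objective as $J_1(u) + J_2(\tilde S)$ plus a constant. Your first-order analysis of $J_1$ also matches.

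However, you miss the step that occupies most of the paper's proof. The change of variable $(S,u)\mapsto(\tilde S,u)$ decouples the objective but \emph{not} the feasible region: the physical constraint $S\ge 0$ becomes $\tilde S \ge -K\alpha_t u^{\gamma_t}$, so the admissible set in the new coordinates is not a product set. Separately maximizing $J_1$ and $J_2$ over their individual domains is therefore not justified a priori, and your ``translation-invariance of $\arg\max$'' step tacitly presupposes that the separately obtained optimizers are jointly feasible. The paper closes this gap by a second, more substantive use of Condition~$\mathbf{A}$: with $s'_0 = -\alpha_t {u^*}^{\gamma_t}$, Condition~$\mathbf{A}$ forces $\mathbf{1}_B = 0$ whenever $s' \le s'_0$, and the optimal real-time price collapses to a constant on that range; hence $f_2(s') = (c_2-c_1)s' + \text{const}$ is increasing on $(-\infty,s'_0]$, which guarantees that the unconstrained maximizer of $f_2$ satisfies $s'^* \ge s'_0$, i.e., $S^* \ge 0$. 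You should add this argument. By contrast, what you flag as the ``main obstacle'' (that $\tilde\vartheta_{s,u}$ is a function of $Y$ alone) is immediate from the explicit formulas in Propositions~\ref{prop:1}--\ref{prop:2}.
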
 \label{prop:3}
\begin{proof} We first show that $\mathbf{\tilde \vartheta}_{s,u}$
depend on $S$ and $u$ only through $s - \mathbf{E}_{D_t}^u[D_t]$.
For convenience, define:
\begin{equation}
\nonumber s' = s - \mathbf{E}_{D_t}^u[D_t].
\end{equation}
Since $Y = s'+ W - \varepsilon_t$, it is clear from Proposition
\ref{prop:1} and Proposition \ref{prop:2} that the real-time pricing
policy $ \mathbf{\tilde \vartheta}_{s,u}$ depends on the day-ahead
decision only through $s'$. We denote this policy as $
\mathbf{\tilde \vartheta}_{s'}:(W,\varepsilon_t)\rightarrow v$. With
this insight, by using the change of variable technique in
(\ref{eq:formRT0}), the objective function of
$\mathbf{\mathcal{\tilde P}^{DA}_{non-pst}}$ can be rewritten as
\begin{equation*}\label{eq:obj1}
\mathbf{E}_{W} \mathbf{E}^u_{D_t}\left[ R^l (\psi^l, s', u, \tilde
\vartheta_{s'}(W,\varepsilon_t))\right]~~~~~~~~~~~~~~~~~~~~~~
\end{equation*}
\begin{equation}
~~~~= f_1(u) + f_2(s') + c_2 \mathbf{E}_W[W], \label{eq:formDA4}
\end{equation}
\noindent where
\begin{equation}
f_1(u)  \buildrel \Delta \over = \alpha_t (u-c_1) u^{\gamma_t},
~~~~~~~~~~~~~~~~~~~~~~~~~~~~~~
\end{equation}
\begin{equation*}
f_2(s')  \buildrel \Delta \over = \mathbf{E}_{W,\varepsilon_t}
\mathbf{E}^{\tilde \vartheta_{s'}(W,\varepsilon_t)}_{D_o} \big[
(\tilde \vartheta_{s'}(W,\varepsilon_t)-c_2)D_o  ~~~~~~~~
\end{equation*}
\begin{equation}
~~~~~~~~~~~~~~~ - c \mathbf{1}_B\left(s' - \varepsilon_t - D_o + W
\right) \big] + (c_2 - c_1) s'.
\end{equation}
Let $\mathcal{F}$ denote the solution space for the objective
function of $\mathbf{\mathcal{\tilde P}^{DA}_{non-pst}}$ defined in
(\ref{eq:formDA4}). Thus
\begin{eqnarray}
\mathcal{F}&=&\{(u,s'); u\ge 0, s'\ge -\alpha_t u^{\gamma_t}\}.
\end{eqnarray}
%
%\noindent Since $s \ge 0$, the feasible region of (\ref{eq:formDA4})
%is given by:
%\begin{equation}
%s' + \alpha_t u^{\gamma_t} \ge 0,
%\end{equation}
%\noindent
%
It can be verified that $u^*$ defined in the proposition statement
maximizes $f_1(u)$. Define $s'_0 = - \alpha_t {u^*}^{\gamma_t}$, and
let $s'^*$ maximize $f_2(s')$. If we show that $(u^*,s'^*)$ belongs
to the solution space $\mathcal{F}$, then $(u^*,s'^*)$ optimizes the
day-ahead scheduling problem in (\ref{eq:formDA4}). Since $u^*\ge
0$, it is now sufficient to show that $s'^* \ge s'_0$. A sufficient
condition to establish this is given by
\begin{equation}
f_2(s') \le f_2(s'_0), ~\forall~ s' \le s'_0 \label{eq:Sufficond}.
\end{equation}
\noindent
Under condition $\mathbf{A}$, wind energy is not sufficient to meet
the total energy demand when day-ahead price is $u^*$, thus:
\begin{equation}
W < (-s'_0 + \varepsilon_t) + D_o.
\end{equation}
Therefore,
\begin{equation}
W + s' < \varepsilon_t + D_o, ~\forall~ s' \le s'_0.
\end{equation}
It follows that 1)  $W + s < D_t + D_o$, i.e., there is no scheduled
energy surplus, thus $\mathbf{1}_B = 0$ in (\ref{eq:formDA4}); 2)
Using the preceding statement, recalling the definition of $Y$, we
see that $Y<q_o(\frac{\gamma_o c_2}{1+\gamma_o})$. Thus, from
Proposition~\ref{prop:2}, the optimal real-time price
$\vartheta_{s'}(W,\varepsilon_t)$ turns out to be a constant
${{\gamma_o} c_2 \over {1+\gamma_o}}$, i.e., independent of the
system state and the day-ahead decisions, when the opportunistic
users are relatively elastic. Also, for the relatively inelastic
case, we know from Proposition~\ref{prop:1} that the optimal
real-time price is a constant $v_{cap}$. Letting $v_0(\gamma_o)$
denote this constant real-time price for both the elastic and
inelastic cases, respectively, we have
\begin{equation}
f_2(s') = (c_2 - c_1) s' + (v_0(\gamma_o)-c_2) q_o(v_0(\gamma_o)),
~\forall~ s' \le s'_0.
\end{equation}
Therefore, $f_2(s') \le f_2(s'_0), ~\forall~ s' \le s'_0$, and
$(u^*,s'^*)$ indeed lies in the feasible region $\mathcal{F}$ and
hence optimizes the day-ahead scheduling problem in
(\ref{eq:formDA4}). The optimal day-ahead decision, $S^*$, can now
be computed using $s'^*$ and $u^*$.
\end{proof}

\begin{corollary} When the non-persistent opportunistic energy users
are relatively inelastic, the complete two-timescale scheduling
decision is given by:
\begin{eqnarray*}
u^* = \left\{
\begin{array}{ll}
{u_{cap} } & { \text{if}~~ -1 \le \gamma_{t} < 0}  \cr {{{\gamma_t}
\over {1+\gamma_t}} c_1} & { \text{if} ~~\gamma_t < -1,}
\end{array}\right. ~~~~~~~
\end{eqnarray*}
\begin{equation*}
S^*= K \left( \kappa_1 E_o v_{cap}^{\gamma_o}  + \alpha_t
{u^*}^{\gamma_t} - \mathbf{F}^{-1}_Z\left(c_p / c \right)\right),~~~
\end{equation*}  \begin{equation*}
\tilde \vartheta_{s,u}(\psi^l) = v_{cap}, ~~~~~~~
~~~~~~~~~~~~~~~~~~~~~~~~~~~~~~
\end{equation*}\noindent where $\mathbf{F}_Z^{-1}(\cdot)$ denotes the inverse of the
CDF of $Z \buildrel \Delta \over = W-\varepsilon_t$.
\end{corollary}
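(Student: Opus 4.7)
\noindent The plan is to obtain the corollary by assembling pieces already established and doing the one remaining optimisation in closed form. Since the opportunistic users are relatively inelastic, Proposition~\ref{prop:1} immediately pins down the real-time pricing component of the schedule as $\tilde{\vartheta}_{s,u}(\psi^l)=v_{cap}$ for every system state. The formula for $u^{*}$ is likewise nothing more than the value produced in the preceding proposition (labelled \ref{lm:Uopt}), where it appeared as the maximiser of the decoupled function $f_1(u) = \alpha_t(u-c_1)u^{\gamma_t}$. The only quantity that genuinely still needs to be derived is $S^{*}$, which I would obtain by feeding the two facts above into the $S^{*}$ expression in the same preceding proposition and then performing the one-dimensional optimisation.

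\noindent For that optimisation I would first note that once $v$ is fixed at $v_{cap}$, the term $(v-c_2)D_o$ inside the expectation has a mean independent of $S$, so it drops out of the derivative. Moreover, because the real-time price is constant and not state-dependent, the demand $D_o$ has fixed variance $\sigma_o^2(v_{cap})$ and, by the scaling observation $\sigma_Y^2 \gg \sigma_o^2$ recorded just before Proposition~\ref{prop:1}, is dominated by the variability of $Z \eqdfn W - \varepsilon_t$. I would therefore invoke the same certainty-equivalence step that was used in the proof of Proposition~\ref{prop:2} and replace $D_o$ by its mean $q_o(v_{cap}) = \kappa_1 \alpha_o E_o v_{cap}^{\gamma_o}$ inside the indicator penalty, leaving only
\[
(c_2-c_1)S \;-\; cK\,\mathbf{E}_Z\!\left[\bigl(s + Z - \alpha_t u^{*\gamma_t} - q_o(v_{cap})\bigr)^{+}\right]
\]
to be maximised over $S$, where $s = S/K$.

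\noindent The first-order condition is then obtained by an envelope-theorem argument which eliminates the contribution of the moving boundary of $\mathbf{1}_B$ and yields $(c_2-c_1) - c\,\mathbf{P}\!\left(Z \ge \alpha_t u^{*\gamma_t} + q_o(v_{cap}) - s\right) = 0$. Using $c = c_p-c_1+c_2$ to simplify $1-(c_2-c_1)/c = c_p/c$, this rearranges to $F_Z\!\left(\alpha_t u^{*\gamma_t} + q_o(v_{cap}) - s\right) = c_p/c$, and therefore $s = \alpha_t u^{*\gamma_t} + q_o(v_{cap}) - F_Z^{-1}(c_p/c)$. Multiplying by $K$ gives $S^{*}$ in exactly the form claimed (with $q_o(v_{cap})$ absorbed into $\kappa_1 E_o v_{cap}^{\gamma_o}$ up to the normalising constant $\alpha_o$).

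\noindent The step I expect to be the main obstacle is justifying that the certainty-equivalence substitution of $D_o$ by its mean preserves the optimal $S^{*}$; rigorously this asks for more than $\sigma_o \ll \sigma_Y$, namely a quantitative estimate showing that the resulting error in the penalty term is negligible uniformly in a neighbourhood of the optimum. The secondary but routine obligations are verifying that the $S$-dependent objective is concave (which follows because the positive-part expectation is convex in $s$ and appears with a negative coefficient) and that the optimiser lies in the feasible region $\mathcal{F}$, i.e.\ that $s'^{*} \ge s'_0$, both of which can be discharged along the lines of the argument in the preceding proposition.
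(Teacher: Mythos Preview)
The paper states this corollary without proof, treating it as an immediate consequence of Propositions~\ref{prop:1} and~\ref{lm:Uopt}, and your proposal correctly fills in exactly the computation the authors leave implicit: pull $\tilde\vartheta_{s,u}=v_{cap}$ from Proposition~\ref{prop:1}, pull $u^*$ from Proposition~\ref{lm:Uopt}, substitute both into the $\arg\max$ expression for $S^*$ there, replace $D_o$ by $q_o(v_{cap})$ via the $\sigma_o\ll\sigma_Y$ approximation, and solve the resulting newsvendor-type first-order condition. Your algebra in the FOC is correct, including the identification $1-(c_2-c_1)/c = c_p/c$, and your flagging of the missing $\alpha_o$ in the stated $S^*$ is a legitimate observation about the paper's notation rather than a flaw in your argument.
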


\section{Dispatch and Scheduling with Persistent Opportunistic Energy Users}

We now study multi-timescale dispatch and scheduling when the
opportunistic users are {\em persistent}. Simply put, a persistent
opportunistic user waits
%until the real-time price $v$ is
%below its acceptance level $V$
in the system for a new real-time price, in the next $T_2$-slot, if
the current real-time price is not acceptable. We assume that the
opportunistic users are persistent across \textit{both} $T_2$ and
$T_1$ slots, and that the opportunistic energy users that arrived in
the day leave the system at the end of the day.

%
%In the persistent opportunistic user model, once the opportunistic energy users have
%energy requests, they would wait until the real-time price $v$ is
%below its acceptance level $V$. The demand of the opportunistic
%energy users which rejects $v_{k-1}$ would be carried into the $k$th
%$T_2$ slot.

Due to the persistent nature of the opportunistic users, scheduling
decisions in both $T_2$ and $T_1$-slots affect the system trajectory
and hence scheduling decisions in future time-slots, across both
timescales. Thus, the scheduling problem involves hierarchically
structured control \cite{Mahmoud77}, with the hierarchy defined
across timescales. With this insight, we treat the scheduling
problem as a {\em multi-timescale Markov decision process} (MMDP)
\cite{Chang03} where decisions made in the higher level affects both
the state transition dynamics and the decision process at the lower
level, while decisions at the lower level affect only the decisions
made at the upper level. The multi-timescale dispatch and scheduling
problem at hand is particularly unique in the following sense: the
two timescales \textit{do not overlap}, since the upper level
decisions (day-ahead) are made in \textit{non real-time}. Thus the
upper level scheduler does not have any direct observation of the
effect it has on the lower level system dynamics, until the horizon,
and make decisions solely based on stochastic understanding of the
behavior of the lower level process. These properties make the
two-timescale scheduling problem, with persistent users, uniquely
challenging. We now describe the problem in detail.
%\vspace{-15pt}

\begin{figure*}[hb]
\vspace*{2pt} \hrulefill
\begin{eqnarray}\label{eq:expectations}
E_{P^u_{m+1}}^{\{\psi^u_m, a_m^u\}}(.)&=&E_{P^l_{2,m}}^{P^l_{1,m}=P^u_m}E_{P^l_{3,m}}^{P^l_{2,m}}\ldots E_{P^l_{K,m}}^{P^l_{K-1,m}}E_{P^u_{m+1}}^{P^l_{K,m}}(.)\nonumber\\
E_{P^l_{k+1,m}}^{P^l_{k,m}}(.)&=&E_{W_{k,m}}E_{D_{t_{k,m}}}E_{N_{k,m}}E_{P^l_{k+1,m}}^{\{W_{k,m},D_{t_{k,m}},N_{k,m},P^l_{k,m}\}}(.)\nonumber\\
E_{P^l_{k+1,m}}^{\{W_{k,m},D_{t_{k,m}},N_{k,m},P^l_{k,m}\}}(.)&=&\sum_{{P^l_{k+1,m}}=0}^{N_{k,m}+P^l_{k,m}}
\left(
\begin{array}{cc}
P^l_{k,m}+N_{k,m}\\
P^l_{k+1,m}
\end{array}\right)
(1-\Pi_{v_{k,m}})^{P^l_{k+1,m}}(\Pi_{v_{k,m}})^{(N_{k,m}+P^l_{k,m}-P^l_{k+1,m})}(.)\nonumber\\
\Pi_{v_{k,m}}&=&P(V\le v_{k,m})\nonumber\\
v_{k,m}&=&\zeta(W_{k,m},D_{t_{k,m}},P^l_{k,m})\nonumber\\
E^{\{N_{k,m},P^l_{k,m},
v_{k,n}\}}_{N_{a_{k,n}}}(.)&=&\sum_{N_{a_{k,m}}=0}^{N_{k,m}+P^l_{k,m}}
\left(
\begin{array}{cc}
P^l_{k,m}+N_{k,m}\\
N_{a_{k,m}}
\end{array}\right)
(\Pi_{v_{k,m}})^{N_{a_{k,m}}}(1-\Pi_{v_{k,m}})^{(N_{k,m}+P^l_{k,m}-N_{a_{k,m}})}(.)
\end{eqnarray}
\end{figure*}

For day-ahead scheduling, the scheduler decides the energy dispatch
$S_m$ and the retail price $u_m$ for the $m$th $T_1$-slot in the
next day. Recall that the day-ahead scheduler has an accurate
forecast of the expected amount of wind generation $\theta_m$. We
define the system state, $\psi^u_m$, corresponding to the $m$th
$T_1$-slot in the next day as $\psi^u_m=\{\theta_m,P^u_m\}$, where
$P^u_m$ denotes the number of persistent opportunistic users carried
over from the $(m-1)$th $T_1$-slot. During real-time scheduling, the
scheduler, in each $T_2$-slot, has knowledge of the wind generation
and the demand from traditional energy users, along with the number
of persistent opportunistic users carried over from previous
$T_2$-slot. Based on this information, it must decide a real-time
price $v_{k,m}$ for the $k$th $T_2$-slot in the $m$th $T_1$-slot,
i.e., the $(k,m)$th $T_2$-slot). We define the observable
(observable in real time) state of the system in $(k,m)$th
$T_2$-slot as $\psi^l_{k,m}=\{W_{k,m},{D_t}_{k,m},P^l_{k,m}\}$,
where $W_{k,m}$ denotes the wind generation in the $k$th $T_2$-slot
in the $m$th $T_1$-slot , $D_{t_{k,m}}$ is the energy demand from
traditional energy users in the $(k,m)$th $T_2$-slot, $P^l_{k,m}$
denotes the number of persistent opportunistic energy users carried
over from the previous $T_2$-slot to the $(k,m)$th slot. Having
explicitly defined the states of the system for the day-ahead
scheduling and real-time scheduling, we introduce the optimality
equations next. With $\vec{X}_m=[X_{1,m}, \ldots, X_{K,m}]$, we have
%, $\vec{D}_{t_m}=[D_{t_{1,m}}, \ldots, D_{t_{K,m}}]$, $\vec{v}_m=[v_{1,m},\ldots,v_{K,m}]$, we have
%\begin{table*}
\begin{eqnarray}
\lefteqn{V_m^u({\psi^u}_m)=}\nonumber\\
&&\max_{s_m,u_m}\Big\{E_{\vec{W}_m,\vec{D}_{t_m}}\Big[\max_{\vec{v}_m}\big\{E_{\vec{P}_m}\nonumber\\
&&\big[R_{k,m}^{l}(\{{\psi^u}_m,\psi^l_{k,m}\},s_m,u_m,v_{k,m})\big]\nonumber\\
&&+E_{P^u_{m+1}=P^l_{K,m}}V_{m+1}^u(\psi_{m+1}^u)\big\}\Big]\Big\}
\end{eqnarray}
%\end{table*}
As noted earlier, this is a MMDP over a finite horizon, with
$s_m,u_m$ being the upper level (slower timescale) decisions and
$v_{k,m}$ being the lower level decisions.

To mitigate the complexity of the MMDP problem, next we exploit the
structural properties of the multi-scale dispatch and scheduling
problem and recast it as a classic Markov decision process (MDP)
(e.g., \cite{puterman94}).

\begin{proposition}
With appropriately defined immediate reward $R_m^u$ and action space
$a^u_m$, the two-level scheduling problem can be written as a
classic MDP at the slower time-scale, as below:
\begin{eqnarray}
V_m^u(\psi^u_m)&=&\max_{a^u_m=\{s_m,u_m,\zeta_m\}}\Big\{R_m^u(\psi_m^u,a^u_m)\nonumber\\
&&+E_{P^u_{m+1}}^{\{\psi^u_m,a^u_m\}}V_{m+1}^u(\psi^u_{m+1})\Big\}.\nonumber
\end{eqnarray}
\end{proposition}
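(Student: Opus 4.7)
The plan is to eliminate the inner maximization over the vector of real-time prices $\vec{v}_m$ by lifting it into the upper-level action, so that what was previously an alternating $\max$-$E$-$\max$-$E$ structure collapses into a single Bellman recursion at the slower timescale. The key observation is that in the inner problem, each $v_{k,m}$ is chosen with full knowledge of the observable real-time state $\psi^l_{k,m}=(W_{k,m},D_{t_{k,m}},P^l_{k,m})$, so by the usual strategy/control equivalence the pointwise optimizer can be represented as a measurable feedback rule $\zeta_m:\psi^l_{k,m}\mapsto v_{k,m}$. Declaring this rule at the start of the $m$th $T_1$-slot, together with $s_m$ and $u_m$, yields the composite action $a^u_m=\{s_m,u_m,\zeta_m\}$.

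First I would establish the max--expectation swap
\[
E_{\vec{W}_m,\vec{D}_{t_m}}\!\Big[\max_{\vec{v}_m}\,E_{\vec{P}_m}\!\big[\,\cdot\,\big]\Big]
\;=\;\max_{\zeta_m}\,E_{\vec{W}_m,\vec{D}_{t_m}}\,E_{\vec{P}_m}\!\big[\,\cdot\,\big]\Big|_{v_{k,m}=\zeta_m(\psi^l_{k,m})}.
\]
This is a direct consequence of the principle of optimality for the $K$-stage inner problem: since $v_{k,m}$ is a function of the realized lower-level state and the dynamics of $P^l_{k+1,m}$ given $(\psi^l_{k,m},v_{k,m})$ are specified by the binomial law displayed in (\ref{eq:expectations}), a measurable selector attaining the pointwise maximum exists and achieves the outer expectation.

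Next I would define the aggregate immediate reward as the sum of the per-$T_2$-slot net profits conditional on $(\psi^u_m,a^u_m)$,
\[
R_m^u(\psi^u_m,a^u_m)\;\eqdfn\;\sum_{k=1}^{K}E_{P^l_{k,m}}^{\{\psi^u_m,a^u_m\}}\,E_{W_{k,m},D_{t_{k,m}}}\,E_{N_{a_{k,m}}}^{\{N_{k,m},P^l_{k,m},v_{k,m}\}}\big[R^l_{k,m}\big],
\]
with $v_{k,m}=\zeta_m(\psi^l_{k,m})$ and the nested expectations evaluated as in (\ref{eq:expectations}). Because wind, traditional demand, Poisson arrivals, and individual acceptance decisions within the $m$th $T_1$-slot are mutually independent and independent of the past given $\psi^u_m$, this quantity is well-defined as a deterministic function of $(\psi^u_m,a^u_m)$.

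Finally I would verify the Markov property at the $T_1$-scale. The next state $\psi^u_{m+1}=\{\theta_{m+1},P^u_{m+1}\}$ has $\theta_{m+1}$ exogenous (from the day-ahead forecast) and $P^u_{m+1}=P^l_{K,m}$ obtained by iterating the binomial queue recursion from $P^u_m$ through $k=1,\dots,K$ under $\zeta_m$. This shows that $P^u_{m+1}$ has a conditional distribution depending only on $(\psi^u_m,a^u_m)$, which is precisely the transition kernel $E_{P^u_{m+1}}^{\{\psi^u_m,a^u_m\}}$ displayed in (\ref{eq:expectations}). Substituting the swap from the first step and grouping the within-slot expectations into $R_m^u$ converts the original MMDP recursion into the asserted classical Bellman equation. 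The main obstacle is the first step: although conceptually routine, one must be careful that the enlarged action space, which now contains the function $\zeta_m$, is rich enough to realize the pointwise inner maximizer — this is where a measurable-selection argument (or the fact that only the value $\zeta_m(\psi^l_{k,m})$ at the realized state matters in each summand) is invoked.
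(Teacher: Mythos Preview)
Your approach is essentially the paper's: lift the lower-level decision rule into the upper-level action so that $a^u_m=\{s_m,u_m,\zeta_m\}$, then verify that the resulting upper-level process is Markov and write its Bellman equation. The paper's ``proof'' is really a construction rather than a derivation; your max--expectation swap and measurable-selection argument supply justification the paper omits.

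One point of divergence is worth flagging. The paper explicitly \emph{restricts} to a stationary real-time rule within each $T_1$-slot, i.e.\ $\zeta_{1,m}=\cdots=\zeta_{K,m}=\zeta_m$, as a simplifying device, and builds $R^u_m$ recursively through the lower-level value $V^l_{k,m}$ under that fixed $\zeta_m$. Your swap identity, however, is argued via the principle of optimality for the $K$-stage inner problem, which in a finite-horizon setting naturally yields a \emph{stage-dependent} feedback $\zeta_{k,m}$; with a single stationary $\zeta_m$ the displayed equality need not hold (only $\le$ is automatic). So either you intend $\zeta_m$ to encode the full sequence $(\zeta_{1,m},\ldots,\zeta_{K,m})$---in which case your argument is sound and slightly more general than the paper's---or you are tacitly adopting the paper's stationarity restriction, in which case the equality claim should be weakened to match. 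Aside from this, your explicit sum for $R^u_m$ and the paper's recursive definition via $V^l_{k,m}$ are equivalent, and your verification of the $T_1$-scale Markov property through the iterated binomial queue is exactly what the paper encodes in the nested expectations of (\ref{eq:expectations}).
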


We now proceed to discuss the transformation of the two-level
scheduling problem from a MMDP to a classic MDP. In the two-level
scheduling problem, recall that the lower level decisions are
essentially the mapping from the realizations of wind energy,
traditional users' energy demand and persistent opportunistic users
to the real-time price, i.e., $ \zeta_{k,m}:
\{W_{k,m},{D_t}_{k,m},P^l_{k,m}\}\rightarrow v_{k,m}$. Consider a
stationary real-time pricing rule within each $T_1$ slot, i.e.
$\zeta_{1,m}=\zeta_{2,m}\ldots \zeta_{K,m}$, and denote this
stationary mapping by $\zeta_m$. A key step is to view $\zeta_m$ as
an action at the day-ahead scheduling level, in addition to actions
$s_m,u_m$. With this insight, we can simplify the MMDP into a
classic MDP. We elaborate further on this below.

The expected net reward from slot $m\in\{1,\ldots,D\}$ until slot
$M$ in the day-ahead market is thus given by the Bellman equation:
\begin{eqnarray}
V_m^u(\psi^u_m)&=&\max_{a^u_m=\{s_m,u_m,\zeta_m\}}\Big\{R_m^u(\psi_m^u,a^u_m)\nonumber\\
&&+E_{P^u_{m+1}}^{\{\psi^u_m,a^u_m\}}V_{m+1}^u(\psi^u_{m+1})\Big\}
\end{eqnarray}
where, recall, $E_x^y$ refers to the expectation over $x$
conditioned on $y$. The expectations used in the MDP formulation are
explicitly provided in (\ref{eq:expectations}).
%Note that the immediate reward $R^u_m$ is a function of the wind generation mean $\theta_m$ and the number of persistent opportunistic users $P^u_m$ -- the observable quantities at the day-ahead scheduler corresponding to the $m^\textrm{th} that are realized for the $m^\textrm{th}$ slot at the day-ahead schedulerand the
The terminal reward, $V_M^u$, is given by
\begin{eqnarray}
V_{M}^u(\psi^u_M)&=&\max_{a_M^u}R_M^u(\psi^u_M,a_M^u)
\end{eqnarray}
Note that the immediate reward corresponding to $m$th $T_1$-slot is
a function of the realized values of wind generation mean (that is
accurately forcast 24 hours ahead) and the number of persistent
opportunistic users carried over from previous $T_1$-slot.
%It is worth mentioning that our reformulation of the MMDP as an MDP holds at an `action space level', i.e., at the beginning of each $T_1$ slot, one cumulative action is taken for both the $T_1$ and $T_2$ time scales.
%However, the rewards at the $T_1$ timescale must still be expressed as the net reward for an MDP at a lower level.
We now proceed to explicitly characterize the immediate reward
$R^u_m$, for $m\in\{1,\ldots M\}$:
\begin{eqnarray}
\lefteqn{R_m^u(\psi^u_m,a^u_m)}\nonumber\\
&=&V_{k,m}^l(\{\theta_m,P^l_{k,m}\},a^u_m)|_{k=1}
\end{eqnarray}
where, $P^l_{1,m}=P^u_m$ by definition, and for
$k\in\{1,\ldots,K\}$,
\begin{eqnarray}
\lefteqn{V_{k,m}^l(\{\theta_m,P^l_{k,m}\},a^u_m=\{s_m,u_m,\zeta_m\})}\nonumber\\
&=&E_{W_{k,m}}^{\theta_m}
E_{D_{t_{k,m}}}^{u_m}R_{k,m}^l(\psi^l_{k,m}=\{W_{k,m},D_{t_{k,m}},P^l_{k,m}\},a_m^u)\nonumber\\
&&+E_{P^{l}_{k+1,m}}^{P^l_{k,m}}V_{k+1,m}^l(\{\theta_m,P^l_{k+1,m}\},a^u_m),
\end{eqnarray}
where $V_{K,m}$ is given by
\begin{eqnarray}
\lefteqn{V_{K,m}^l(\{\theta_m,P^l_{K,m}\},a^u_m)}\nonumber\\
&=&E_{W_{K,m}}^{\theta_m}
E_{D_{t_{K,m}}}^{u_m}R_{K,m}^l(\psi^l_{K,m},a_m^u).
\end{eqnarray}
Note that the quantities $R^l_{k,m}$ and $V^l_{k,m}$ can be regarded
as the immediate reward and the net reward at the lower level MDP,
if the problem is viewed as an MMDP. The quantity $R^l_{k,m}$ is a
function of the realizations of the wind generation, the demand from
traditional energy users and the number of persistent opportunistic
energy users carried over from previous $T_2$-slot. More
specifically, we have that
%\begin{eqnarray}
%\lefteqn{R_{k,n}^l(\{w_{k,n},t_{k,n},o_{p_{k,n}}\},\{a_n^u,a_n^l\})}\nonumber\\
%&=&u_n t_{k,n} \varepsilon_t-c_1s_n+E_{o_{e_{k,n}}}
%E^{\{o_{e_{k,n}},o_{p_{k,n}}\}}_{o_{a_{k,n}}}\Big[v_{k,n}o_{a_{k,n}}\varepsilon_o+e_{k,n}(c_2-(c_1+c_2-c_p)\mathbf{1}_{k,n})\Big]
%\end{eqnarray}
\begin{eqnarray}
\lefteqn{R_{k,m}^l(\psi^l_{k,m},a_m^u)}\nonumber\\
&=&u_m D_{t_{k,n}}+E_{N_{k,m}}E^{\{N_{k,m},P^l_{k,m},v_{k,m}\}}_{N_{a_{k,m}}}\Big[v_{k,m}D_{o_{k,m}}\nonumber\\
&&+\textbf{1}_A(-c_p s_m)+\textbf{1}_B(-\epsilon_{k,m}c_p-(s_m-\epsilon_{k,m})c_1)\nonumber\\
&&+\textbf{1}_C(-c_1 s_m+c_2\epsilon_{k,m})\Big]
\end{eqnarray}
where $N_{k,m}$ denotes the number of opportunistic users arriving
at the $(k,m)$th slot, and $N_{a_{k,m}}$ denotes the number of
opportuinstic users that become active in slot $(k,m)$.
$D_{o_{k,m}}$ denotes the energy consumption by the opportunistic
users in slot $(k,m)$. Recall that $E_o$ denotes the energy demand
per \textit{active} opportunistic energy user. Thus, we have
\begin{eqnarray}
%D_{t_{k,n}}&=&t_{k,n}\varepsilon_t\nonumber\\
D_{o_{k,m}}&=&N_{a_{k,m}}E_o
\end{eqnarray}
The quantity $\epsilon_{k,m}$ denotes the surplus energy, given by
\begin{eqnarray}
\epsilon_{k,m}&=&W_{k,m}+s_m-(D_{t_{k,m}}+ D_{o_{k,m}}).
\end{eqnarray}
Recall, the real-time price $v_{k,m}$ is given by the mapping
$\zeta_m$ as
\begin{eqnarray}
v_{k,m}&=&\zeta_m(\{W_{k,m},D_{t_{k,m}},P^l_{k,m}\}).
\end{eqnarray}
The indicator functions $\mathbf{1}_A$, $\mathbf{1}_B$ and
$\mathbf{1}_C$ correspond to the various deficit/surplus energy
states, and were explicitly defined in Section~III.

Summarizing, we have shown that the two-timescale dispatch and
scheduling problem with persistent users can be recast as an MDP
with continuous state and action spaces. Using appropriate
discretization techniques, we can reformulate it as a classic
discrete state and action space MDP, which can be solved optimally
or near-optimally using various solution techniques available in the
literature \cite{shwartz05}.

%\cite{Li05}
%\cite{Marecki06}

\section{Numerical Results}
For concreteness, we now study, via numerical experiments, the
performance of the proposed multi-timescale dispatch and scheduling
policy. First, we define profit margin as the ratio of the average
net profit to the sales in a $T_2$-slot. We assume the opportunistic
users are non-persistent.
%\vspace{-20pt}
\begin{figure}[htb]
\begin{center}
\includegraphics
[width=0.44\textwidth]{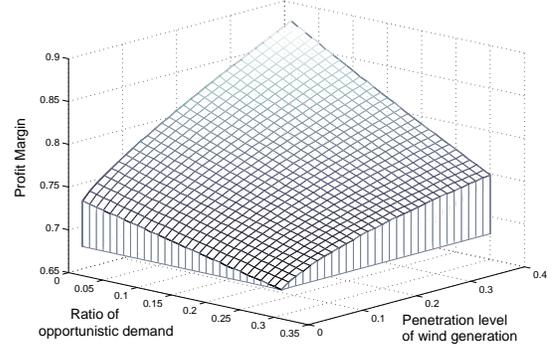}
\end{center}
\caption{Profit margin at various penetration levels of wind
generation and opportunistic demand ($\gamma_o = -2$)}
\label{fig:Profit1}
\end{figure}

In Fig.~\ref{fig:Profit1}, the profit margin is plotted against
various values of the penetration of opportunistic users and
penetration of wind energy in the system with two-timescale
scheduling. We use price elasticity of $-0.5$ for traditional users
and $-2$ for opportunistic users. As expected, the penetration of
wind energy increases the profit margin since wind energy is
harvested `cost-free'. In contrast, as the demand of penetration of
opportunistic energy users increases, the  profit margin
\textit{decreases}. To explain this, we first identify the two
principal events that lead to losses in the system operator's
profit: (a) The penalty ($c_p$) when a fraction of the dispatched
conventional generation is reverted; (b) the loss of revenue when
the dispatched energy $S$ is insufficient to meet the total energy
needs and the operator purchases energy from fast-start up
generators at higher cost $c_2$. The uncertainty in opportunistic
energy users' demand is one of the factors that could lead to either
of these events. An over-estimation of this demand in the day-ahead
schedule leads to event (a), while an underestimation of this demand
leads to event (b). In our experiments, since the demand from
traditional users is deterministic, the higher the demand from
opportunistic users, the higher the uncertainties on the demand side
and hence the higher the losses at the system operator. This insight
further underscores the need for efficient pricing mechanisms that
intelligently tackles the ever-increasing uncertainties in the power
system.

\begin{figure}[htb]
\begin{center}
\includegraphics
[width=0.4\textwidth]{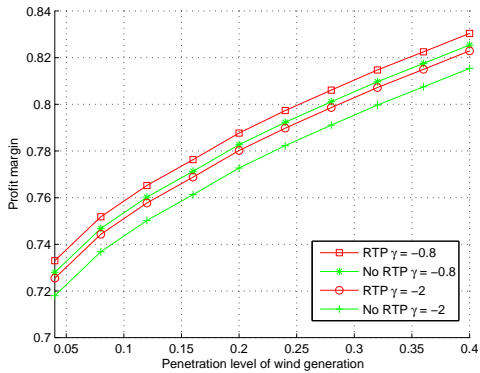}
\end{center}
\caption{Profit margin with various elasticity ($r_u$ =
20\%)}\label{fig:curve2}
\end{figure}

Fig.~\ref{fig:curve2} compares the profit margins with and without
multi-timescale pricing for various values of wind energy
penetration levels, i.e., the ratio of the wind energy generation to
the total energy. In the benchmark system, without multi-timescale
price, \textit{all} the users exhibit traditional response to prices
with the same price elasticity $\gamma$. The prices are optimized by
taking into consideration the statistics of the traditional users'
demand and wind generation. Considering the multi-timescale
scheduling system, for fair comparison, we assume that the
opportunistic users have the same elasticity as the traditional
users in the system. We compare the two systems for different values
of $\gamma$. As expected, for a fixed wind penetration level, for
the same value of $\gamma$, the profit margin is higher when
real-time pricing is employed to mitigate the uncertainties in
demand and supply. Also, note that, as the price elasticity
increases, the opportunistic users exhibit increasingly thrifty
behavior essentially reducing the profit margin for the operator. In
addition, as expected, the profit margin increases with wind
penetration with or without real-time pricing since wind energy is
assumed to be cost-free.

%%
%\begin{figure}[htb]
%\begin{center}
%\includegraphics
%[width=0.4\textwidth]{Draft/Num2A.eps}
%\end{center}
%\caption{Relative social welfare of energy users with ($\gamma_o =
%-2$)} \label{fig:Profit2}
%\end{figure}
%
%\begin{figure}[htb]
%\begin{center}
%\includegraphics
%[width=0.4\textwidth]{Draft/Num2B.eps}
%\end{center}
%\caption{Relative social welfare of energy users at penetration
%level of wind generation $r_u$ = 20\%}\label{fig:Util2}
%\end{figure}

To summarize, numerical results suggest that two-time scale
scheduling effectively addresses the volatility of energy generation
and the uncertainties in the demand from opportunistic users.
Additional insights include the following: the profit margin of
system operators increases with the penetration level of wind and
decreases with demand from opportunistic users and that the
elasticity of the opportunistic users plays a major role in the
power system design.

\section{Conclusion}

Wind generation is among the renewable resources that has most
variability and uncertainty, and exhibits multi-level dynamics
across time.  Aiming to tackle the challenge of integrating volatile
wind generation into the bulk power grid, we study multiple
timescale dispatch and scheduling, for a smart grid model, via
day-ahead scheduling and real-time scheduling. In day-ahead
scheduling, with the statistical information on wind generation and
energy demands, we characterize the optimal procurement of the
energy supply and the day-ahead retail price for the traditional
energy users; in real-time scheduling, with the realization of wind
generation and the load of traditional energy users, we optimize
real-time prices to manage the opportunistic energy users so as to
achieve system-wide reliability. More specifically, when the
opportunistic users are non-persistent, we obtain closed-form
solutions to the multi-scale scheduling problem. For the persistent
case, we treat the scheduling problem as a multi-timescale Markov
decision process, and then we show that it can be recast,
explicitly, as a classic Markov decision process.

We believe that the studies we initiated here on multi-timescale
dispatch and scheduling for integrating volatile renewal energy into
smart grids, scratch only the tip of the iceberg. There are still
many questions remaining open to improve the penetration of
renewable energy into power grids, and we are currently
investigating these issues along this avenue.

\bibliographystyle{IEEETr}
\bibliography{Reference}

\end{document}